\begin{document}

	\title{Private Shotgun DNA Sequencing}
	
	\author{Ali Gholami, Mohammad Ali Maddah-Ali, and Seyed Abolfazl Motahari%
		\thanks{Ali Gholami is with the Department of Electrical Engineering, Sharif University of Technology, Tehran, Iran. (email:ali.gholami@ee.sharif.edu). Mohammad Ali Maddah-Ali is with Nokia Bell Labs, Holmdel, New Jersey, USA. (email:mohammad.maddahali@nokia-bell-labs.com). Seyed Abolfazl Motahari is with the Department of Computer Engineering, Sharif University of Technology, Tehran, Iran. (email:motahari@sharif.edu)}%
	}

	\maketitle

	\begin{abstract}

	Current techniques in sequencing a genome allow a service provider (e.g. a sequencing company) to have full access to the genome information, and thus the privacy of individuals regarding their lifetime secret is violated. In this paper, we introduce the problem of private DNA sequencing, where the goal is to keep the DNA sequence private to the sequencer. We propose an architecture, where the task of reading fragments of DNA and the task of DNA assembly are separated, the former is done at the sequencer(s), and the later is completed at a local trusted data collector. To satisfy the privacy constraint at the sequencer and reconstruction condition at the data collector, we create an information gap between these two relying on two techniques: (i) we use more than one non-colluding sequencer, all reporting the read fragments to the single data collector, (ii)  adding the fragments of some known DNA molecules, which are still unknown to the sequencers, to the pool. We prove that these two techniques provide enough freedom to satisfy both conditions at the same time. 
	\end{abstract}

	\begin{IEEEkeywords}
		DNA sequencing, shotgun sequencing, privacy. 
	\end{IEEEkeywords}

	\section{Introduction}
	\label{sec:introduction}
	
The human genome is a long string comprised of roughly 3 billion units called nucleotides. Each nucleotide is chosen from a set of four possible types $\{A,C,G,T\}$. Human genomes are more than 98 percents alike. Variations within a human population can be divided into common variants and rare variants. Most of the differences among individuals' genomes can be captured by the common variants known as Single Nucleotide Polymorphisms (SNPs). An individual's genome can be identified uniquely by these SNPs. The process of identifying the SNPs of an individual is called genotyping. 

Having access to the genome of an individual reveals critical information regarding their phenotype~\cite{lander2011initial}, \cite{Stan}. It can be used to predict the risk of a particular disease \cite{lander1996new}, \cite{risch1996future}, \cite{reich2001allelic} and to propose personal medical care. As a result, the rate of genetic testing services has risen dramatically in recent years \cite{grosse2006clinical,american2003american}. On the other hand,  this information can be abused, for example by insurance companies to increase the rates for individuals. Therefore, keeping the confidentiality of this information is a major concern~\cite{GenomeTheft}, \cite{heeney2011assessing}.  Adding to the concern, we should note that disclosure of this information can also put one's relatives at risk  due to inherited similarities \cite{knoppers2002genetic}. 
Privacy issues in genomics have been the focus of many research initiatives~\cite{kaye2012tension, lunshof2008genetic, malin2004not, lowrance2007identifiability}, mainly investigating anonymization techniques  in a DNA data set \cite{sweeney2002k, aggarwal2005k, johnson2013privacy, fienberg2011privacy, kaye2009data, erlich2014routes}. However, leaking DNA information in the process of sequencing has been completely overlooked.

In this paper, we aim  to propose and analyze  some technique with provable guarantees in keeping the genomic information private. In the proposed scheme, we still use existing sequencing machines for sequencing, however we prevent these machines from assembling the genomes. 
We note that the process of sequencing consists of two phases: (1) reading, i.e. identifying the sequence of genomes in each fragment of DNA. (2) processing, i.e. concatenation of the read fragments of DNA and assembling the entire sequence. The proposed solution in this paper is based on separating these two steps as follows. The reading step which is technological and expensive is outsourced, and the processing step which is computational is done locally, in-house, on trusted machines. We increase the ambiguity at the reading machines such that the processing step is impossible to be executed there. To increase the ambiguity, in a pooled sequencing framework, we add the fragments of some individuals with known DNA sequences to the pool. On top of that, we distribute the reading task among several sequencers and collect all the results to complete the processing task at a local processor, referred to as \textit{data collector}. This will increase the knowledge gap between the data collector and each sequencer and allows us to complete the processing step at the data collector while limiting the information leakage at each sequencer.

This problem is conceptually connected to the Private Information Retrieval (PIR) problem \cite{chor1995private} \cite{sun2017capacity}, \cite{banawan2018capacity}, \cite{gertner2000protecting}. In PIR, there is a database that we want to retrieve a particular information from without letting it know what we have retrieved. In \cite{chor1995private} it was shown that if we have just one database, we should download all the information in the database so that it can not distinguish which data we actually aimed to retrieve, which is a trivial solution. The interesting observation in PIR is that having multiple databases helps us to reduce the communication load significantly while preserving privacy. This motivates us to use multiple sequencers rather than one sequencer in the process of sequencing.  

The rest of the paper is organized as follows. The problem setting is provided in Section \ref{sec2}. In Section \ref{sec3}, an achievable scheme is introduced. In Sections \ref{sec:static channel} and \ref{sec:random channel}, the mathematical modelings underlying the scheme are introduced and the results are provided. Section \ref{conclusion} concludes the paper and introduces some future steps.

	\section{Problem Setting} \label{sec2}
	
	We propose a system architecture that includes a trusted data collector and a set of $S$ non-colluding sequencers, for some $S \in \mathbb{N}$. In addition, There is a set of $M\in \mathbb{N}$ individuals,  labeled as individual 1 to individual $M$, who are interested to know their genome sequences, without leaking their DNA information to any of the sequencers. The data collector gathers the genomes of these individuals and sends the fragments of a particular subset of these individuals to each sequencer. Each sequencer $s\in [S]$ reads these fragments (\emph{reading phase}), and sends the resulting \emph{reads} to the data collector. 
Then the data collector uses all the reads received from all sequencers and assembles the DNA sequences for each individual (\emph{processing phase}). One should note here that, unlike conventional approaches,  in the scenario proposed in this paper, we separate the reading phase, done in the sequencers,  and the processing phase, completed in the trusted in-house data collector,  in order to provide DNA privacy. Indeed, our objective is to design the parameters of the system such that the sequencers \emph{collectively} can provide enough reads for the data collector to be able to assemble the DNA sequences, while each sequencer \emph{alone} does not have enough information to reconstruct some parts of the sequences.  

 It is also assumed that there is a set of $K\in \mathbb{N}$ individuals,  that the data collector knows their sequences a--priori, while the sequencers have no information about their sequences. In the proposed scenario, the DNA sequences of these individuals are used to increase the ambiguity, even more, at the sequencers, as compared to the data collector and improve privacy, as follows. The data collector adds the DNA fragments of these $K$ individuals, to the set of fragments that it sends to each sequencer. We call these $K$ individuals with known DNA sequences as \emph{known individuals}, and those $M$ individuals with unknown DNA sequences as \emph{unknown individuals}.

%
%
%

%
	
The genome of each individual can be characterized by genotyping their SNPs. Note that in every SNP position, two out of four types of bases (or alleles) can occur. The set of the two possible bases occurring in every SNP position is a public information. In every SNP position, the base occurring with more frequency is called major allele and the other one is called minor allele. In haploid individuals, in which there is one copy of the genome, this is equivalent to encoding the genome by a sequence of $1$s and $-1$s that indicate major and minor alleles respectively. Thus, the DNA sequence (or equivalently SNP sequence) can be represented by a vector in  $\{-1,1\}^N$, where $N\in \mathbb{N}$ denotes the total number of SNPs in the desired region of sequencing. 
We define the matrix $\mathbf{X} \in \{-1,1\}^{M\times N}$, where the entry $X_{m,n} \in \{-1,1\}$ is a random variable, denoting  the allele of unknown individual $m \in [M]$ in the SNP position $n\in [N]$. Similarly, we define the matrix $\mathbf{Y} \in \{-1,1\}^{K\times N}$, where the entry $Y_{k,n} \in \{-1,1\}$ is a random variable, denoting  the allele of the known individual $k \in [K]$ in the SNP position $n\in [N]$.


The data collector sends a set of fragments $\bigcup\limits_{m=1}^{M}\bigcup\limits_{n=1}^{N} \mathcal{F}^{(s)} _{m,n}+\bigcup\limits_{k=1}^{K} \bigcup\limits_{n=1}^{N} \tilde{\mathcal{F}}^{(s)} _{k,n}$ to the sequencer $s \in [S]$, where $\mathcal{F}^{(s)} _{m,n}$ and $\tilde{\mathcal{F}}^{(s)} _{k,n}$ denote the set of fragments containing SNP position $n$ for unknown individual $m$ and known individual $k$, respectively (see Fig.~\ref{fig:boat1}).
 We define random variables $\alpha^{(s)}_{m,n} \triangleq |\mathcal{F}^{(s)} _{m,n}|$ and $\tilde{\alpha}^{(s)}_{k,n} \triangleq |\tilde{\mathcal{F}}^{(s)} _{k,n}|$ as the coverage depth for SNPs  $X_{m,n} $ and $Y_{k,n}$,  respectively, at sequencer $s$. As will be seen later on, as the coverage depth increases, there will be more resistance to the error produced by sequencers in the reading process.
Every sequencer then reads the received fragments, with error, and reports these reads to the data collector. The set of reads reported to the data collector by the sequencer $s$ is denoted by $\mathcal{R}^{(s)}$. 
	
	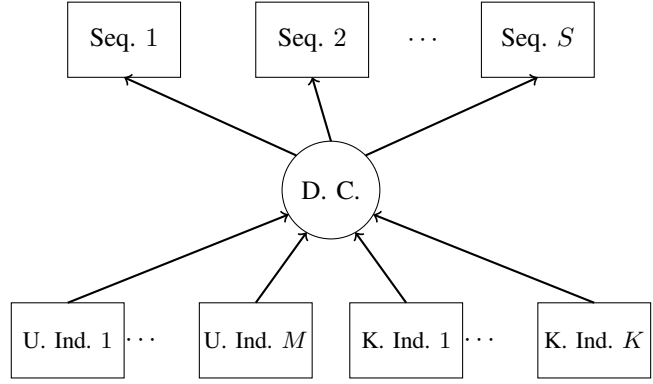
\begin{figure}
		\begin{center}
			\begin{tikzpicture}
			\draw (0,0) rectangle (1.5,1)node [pos=.5] {Seq. $1$};
			\draw (2.5,0) rectangle (4,1)node [pos=.5] {Seq. $2$};
			\node at (4.75,0.5) {$\cdots$};
			\draw (5.5,0) rectangle (7,1)node [pos=.5] {Seq. $S$};
			\draw (3.5,-1.5) circle (6.5 mm) node {D. C.};
			\draw (-0.75,-4) rectangle (0.75,-3)node [pos=.5] {\small U. Ind. $1$};
			\node at (1,-3.5) {$\cdots$};
			\draw (1.75,-4) rectangle (3.25,-3)node [pos=.5] {\small U. Ind. $M$};
			\draw (3.75,-4) rectangle (5.25,-3)node [pos=.5] {\small K. Ind. $1$};
			\node at (5.5,-3.5) {$\cdots$};
			\draw (6.25,-4) rectangle (7.75,-3)node [pos=.5] {\small K. Ind. $K$};
			\draw [thick, ->] (3.5,-8.5 mm) -- (3.25,0);
			\draw [thick, ->] (30.4 mm,-10.4 mm) -- (0.75,0);
			\draw [thick, ->] (39.6 mm,-10.4 mm) -- (6.25,0);
			\draw [thick, ->] (0,-3) -- (29.37 mm,-18.25 mm);
			\draw [thick, ->] (2.5,-3) -- (31.75 mm,-20.63 mm);
			\draw [thick, ->] (4.5,-3) -- (38.25 mm,-20.63 mm);
			\draw [thick, ->] (7,-3) -- (40.63 mm,-18.25 mm);
			\end{tikzpicture}
		\end{center}
		\caption{The block diagram of the proposed scheme in stage 1. First, the fragments of some individuals (known and unknown) are collected by the data collector, then they are pooled in a specific manner and sent to the sequencers.}
		\label{fig:boat1}
	\end{figure}
	
	\begin{figure}
		\begin{center}
			\begin{tikzpicture}
			\draw (0,0) rectangle (1.5,1)node [pos=.5] {Seq. $1$};
			\draw (2.5,0) rectangle (4,1)node [pos=.5] {Seq. $2$};
			\node at (4.75,0.5) {$\cdots$};
			\draw (5.5,0) rectangle (7,1)node [pos=.5] {Seq. $S$};
			\draw (3.5,-1.5) circle (6.5 mm) node {D. C.};
			\draw (0,-4) rectangle (1.5,-3)node [pos=.5] {\small U. Ind. $1$};
			\draw (2.5,-4) rectangle (4,-3)node [pos=.5] {\small U. Ind. $2$};
			\node at (4.75,-3.5) {$\cdots$};
			\draw (5.5,-4) rectangle (7,-3)node [pos=.5] {\small U. Ind. $M$};
			\draw [thick, ->] (3.25,0) -- (3.5,-8.5 mm);
			\draw [thick, ->] (0.75,0) -- (30.4 mm,-10.4 mm);
			\draw [thick, ->] (6.25,0) -- (39.6 mm,-10.4 mm);
			\draw [thick, ->] (3.5,-21.5 mm) -- (3.25,-3);
			\draw [thick, ->] (30.4 mm,-19.6 mm) -- (0.75,-3);
			\draw [thick, ->] (39.6 mm,-19.6 mm) -- (6.25,-3);
			\draw (5.5,-8.5 mm) rectangle (9,-21.5 mm)node [pos=.5]
			{
				\begin{tabular}{cc}
				Sequence information \\
				of known individuals \\
				\end{tabular}
			};
			\draw [thick, ->] (5.5,-1.5) -- (41.3 mm,-1.5);
			\end{tikzpicture}
		\end{center}
		\caption{In stage 2, each sequencer sends the results of the reads to the data collector.  Then using the information of the known individuals, the data collector will process the data and assemble the sequences of the unknown individuals.}
		\label{fig:boat2}
	\end{figure}
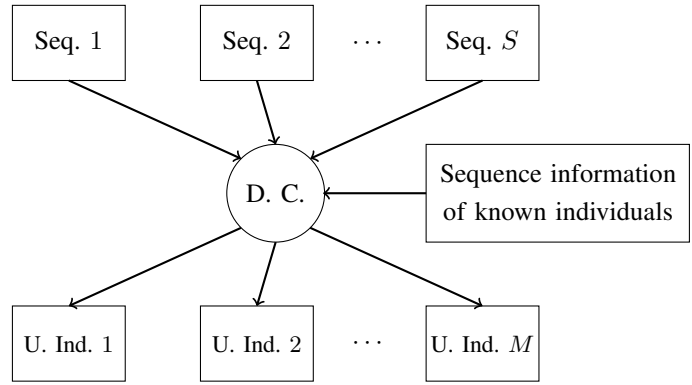
	
Sequencers have errors in reading bases. 
The probability of error in reading a SNP in a fragment is assumed to be constant across all sequences and for all SNPs and is denoted by  $\eta \in (0,1)$.
More precisely, in a sequencer, for a fragment of an individual, and in a SNP, the probability that a $1$ is read $-1$ or vice versa is $\eta$, independent of the sequencer, the individual,  the fragment, and the SNP.



Having $\mathbf{Y}$ as a side-information, the data collector maps $\mathcal{R}^{(s)}$, $\forall s\in [S]$ to the matrix $\hat{\mathbf{X}} \in \{-1, 1\}^{M \times N}$ 
using a function $\phi$, i.e.   
\begin{align*}
\hat{\mathbf{X}} = \phi \left(\mathbf{Y},  \mathcal{R}^{(s)},  s=1, \ldots ,S \right),
\end{align*}
where $\hat{\mathbf{X}}$ refers to an estimate of the matrix of SNPs for unknown individuals $\left(\mathbf{X}\right)$.


	
	
	The scenario should be such that the following two conditions are satisfied:
	\begin{itemize} \label{furmulbandi}
		\item \emph{Reconstruction Condition:}
		Let  $\mathbf{x}_n$ and $\hat{\mathbf{x}}_n$ denote the column $n$ of the matrix $\mathbf{X}$ and $\hat{\mathbf{X}}$ respectively. 
	 The reconstruction condition requires that the inequality below hold for any given  $\epsilon \in (0,1)$:
		\begin{equation} \label{rec}
		\mathbb{P}(\hat{\mathbf{x}}_n\ne \mathbf{x}_n)\leq \epsilon,\quad\forall n\in [N].
		\end{equation} 
				 $\epsilon$ is referred to as the \emph{accuracy level} and is a design parameter. 

		\item \emph{Privacy Condition:} 
	    Let $\mathcal{A}^{(s)}$ indicate the set of all unknown individuals who have fragments sent to sequencer $s\in[S]$ by the data collector.	
	    For privacy to be held,   for any sequencer $s\in [S]$, we want the distribution of $X_{m,n},   m \in \mathcal{A}^{(s)}, n \in [N]$ remains almost the same before and after reading the fragments. To be precise,  the privacy condition requires that the following inequality hold for any given  $\beta \in (0,1)$:
			\begin{equation} \label{pri}
		\frac{I\left(  X_{m,n},   m \in \mathcal{A}^{(s)}, n \in [N]   ;\mathcal{R}^{(s)}\right)}{|\mathcal{A}^{(s)}|N}\leq\beta.
		\end{equation}
		$\beta$ is referred to as the \emph{privacy level} and is a design parameter. 
	\end{itemize}
	
	In the next section,  we will introduce a proposed scheme that satisfies the two conditions simultaneously.

	\section{All-But-One Achievable Scheme} \label{sec3}
	In this section, we propose a particular solution that allows us to satisfy both the reconstruction~\eqref{rec} and privacy~\eqref{pri} conditions. We have two assumptions in our scheme:
	\begin{itemize}
		\item \emph{Assumption 1:} Every fragment is short enough to contain no more than one SNP.
		\item \emph{Assumption 2:} Every fragment is long enough that can be correctly mapped to the reference genome, i.e. we can identify exactly from what region of the genome's sequence it came from.
	\end{itemize}
	
	These two assumptions are realistic. We should keep in mind that there are approximately 3.3 million SNPs in the human genome. Comparing to the 3 billion length of the whole genome, it is concluded that the average distance between two SNPs is roughly 1000 base pairs \cite{shen2013comprehensive}. Moreover, using short read alignment algorithms like Bowtie \cite{langmead2009ultrafast}, it is possible to assemble reads of length in the order of a couple of hundreds. Thus using such algorithms, and choosing the fragments lengths to be about few hundreds, both assumptions are valid simultaneously.
	
	In the proposed achievable scheme, we focus on the case where $S=M$. In cases where $M$ is greater than $S$, we partition the set of individuals into groups of size $S$ and use this scheme for each group separately. We index unknown individuals from $1$ to $M$ ($M$ should be equal to or greater than $2$ is this scheme) and similarly known individuals from $1$ to $K$. In this paper, we propose a specific assignment scheme, to send the fragments of the individuals to the sequencers. In the proposed solution, named \emph{all-but-one scheme}, the data collector sends the fragments of all individuals, except individual $s$, to the sequencer $s$. More precisely, for $\forall s\in [S]$ and $\forall n\in [N]$
	\begin{align}
	\mathcal{F}_{s,n}^{(s)}=\varnothing,
	\end{align}
	or equivalently
	\begin{align} \label{www}
		\mathcal{A}^{(s)}=[S] \backslash \{s\}.
	\end{align}
	This means that for $\forall s\in [S]$ and $\forall n\in [N]$
	\begin{align}
	\alpha_{s,n}^{(s)}=0.
	\end{align}
	From~\eqref{www}, it is concluded that
	\begin{align} \label{mag}
	|\mathcal{A}^{(s)}|=S-1=M-1.
	\end{align}
	Also, to each sequencer, the fragments of all known individuals are sent by the data collector. In addition, we try to keep the coverage depths $\alpha_{m,n}^{(s)}$s and $\tilde{\alpha}_{k,n}^{(s)}$s almost constant and the same for all $s\in [S], m\in [M], k\in [K], n\in [N], m\neq s$. However, this is practically impossible. This means that $\alpha_{m,n}^{(s)}$s and $\tilde{\alpha}_{k,n}^{(s)}$s are random variables for all $s\in [S], m\in [M], k\in [K], n\in [N], m\neq s$. 
	Analyzing the problem with random $\alpha_{m,n}^{(s)}$s and $\tilde{\alpha}_{k,n}^{(s)}$ is rather complicated. 
	To make the analysis tractable, first, in Section \ref{sec:static channel}, we solve the problem for the case where these variables are constant and the same, and later in Section \ref{sec:random channel}, we generalize the results to the case where these variables are random.
		
	\section{All-But-One Scheme with Constant Coverage Depth}\label{sec:static channel}
	In this section for all $\forall n\in [N],\forall m\in [M],\forall k\in [K],\forall s\in [S], m\neq s$ we have
	\begin{align}
	\alpha^{(s)}_{m,n}=\tilde{\alpha}^{(s)}_{k,n}=\alpha, 
	\end{align}
	for some $\alpha \in \mathbb{N}$.

First, we introduce the main results. Then we derive the mathematical models in the data collector and the sequencers in Subsections \ref{mathdata} and \ref{mathdata2}, respectively.  We rely on these models to prove the main results in Subsections \ref{proof11} and \ref{proof12}. At last, we discuss the results in Subsection \ref{discuss1}.
	
	For the reconstruction condition to be satisfied, we have the following theorem.
	\begin{theorem}
		\label{thm1}
		In the all-but-one scheme with the constant coverage depth of $\alpha$ and reading probability of error of $\eta$, the reconstruction condition~\eqref{rec} is satisfied if
		\begin{equation} \label{recrec}
		\frac{\alpha}{M+K-1} \geq \frac{8\eta (1-\eta)}{(1-2\eta)^2} \ln\left(\frac{2^M}{\epsilon}\right).
		\end{equation}
	\end{theorem}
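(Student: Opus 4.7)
The plan is to set up a minimum-distance (equivalently, Gaussian maximum-likelihood) decoder at the data collector and bound its error probability by a pairwise Chernoff argument combined with a union bound over the $2^M$ candidate values of $\mathbf{x}_n$.

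First, I would argue that at sequencer $s$ and SNP position $n$, the signed read count $T_{s,n}$ is a sufficient statistic. Since the sequencer cannot link a read to its source individual and each read is a single $\pm 1$ value, the multiset of $\alpha(M+K-1)$ reads---summarized by $T_{s,n}$---is everything the data collector obtains from sequencer $s$ at position $n$. Conditional on the alleles, $T_{s,n}$ is a sum of $\alpha(M+K-1)$ independent $\pm 1$ random variables with per-read variance $4\eta(1-\eta)$, with conditional mean $\mu_{s,n}(\mathbf{x}_n)\defeq\alpha(1-2\eta)\bigl(\sum_{m\neq s}X_{m,n}+\sum_k Y_{k,n}\bigr)$ and variance $4\alpha(M+K-1)\eta(1-\eta)$. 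Since $\mathbf{Y}$ is side information, the data collector can evaluate $\mu_{s,n}(\mathbf{x}')$ for every candidate $\mathbf{x}'\in\{-1,+1\}^M$, and I let $\hat{\mathbf{x}}_n\defeq\argmin_{\mathbf{x}'\in\{-1,+1\}^M}\sum_{s=1}^{M}\bigl(T_{s,n}-\mu_{s,n}(\mathbf{x}')\bigr)^2$.

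For the pairwise analysis, fix $\mathbf{x}'\neq\mathbf{x}_n$ and set $c_s\defeq\mu_{s,n}(\mathbf{x}_n)-\mu_{s,n}(\mathbf{x}')$ and $D^2\defeq\sum_s c_s^2$. The event $\{\hat{\mathbf{x}}_n=\mathbf{x}'\}$ forces $\sum_s c_s(T_{s,n}-\mu_{s,n}(\mathbf{x}_n))<-D^2/2$, a zero-mean linear combination of independent bounded summands with total variance $4\alpha(M+K-1)\eta(1-\eta)\,D^2$. A short case analysis on $\Delta\defeq\{m:x'_m\neq X_{m,n}\}$ (splitting on whether $\sum_{m\in\Delta}X_{m,n}$ vanishes---if it does, picking $s^\star\in\Delta$ works; otherwise any $s^\star\notin\Delta$ works) produces a sequencer $s^\star$ with $|c_{s^\star}|\geq 2\alpha(1-2\eta)$, whence $D^2\geq 4\alpha^2(1-2\eta)^2$ uniformly in $\mathbf{x}'$. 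Bernstein's inequality then yields
\[
\Pp\bigl(\hat{\mathbf{x}}_n=\mathbf{x}'\bigm|\mathbf{x}_n\bigr)\leq\exp\!\left(-\frac{D^2}{32\alpha(M+K-1)\eta(1-\eta)}\right)\leq\exp\!\left(-\frac{\alpha(1-2\eta)^2}{8(M+K-1)\eta(1-\eta)}\right).
\]
Summing over the fewer than $2^M$ alternatives gives $\Pp(\hat{\mathbf{x}}_n\neq\mathbf{x}_n)\leq 2^M\exp\bigl(-\alpha(1-2\eta)^2/(8(M+K-1)\eta(1-\eta))\bigr)$, which is at most $\epsilon$ precisely under the hypothesis of the theorem.

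The main obstacle is obtaining the tight constant $8\eta(1-\eta)$. A naive Hoeffding bound uses only the bounded range of the reads and yields the strictly weaker exponent $\alpha(1-2\eta)^2/(2(M+K-1))$; securing the $\eta(1-\eta)$ factor requires the variance-aware Bernstein (or equivalent binomial Chernoff) bound, and one must verify in the regime of interest that the linear-increment term in Bernstein's denominator is dominated by the variance term.
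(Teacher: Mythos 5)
Your architecture coincides with the paper's: the per-sequencer, per-SNP signed counts as sufficient statistics, a minimum-distance decoder using $\mathbf{Y}$ as side information, a union bound over the $2^M$ alternatives, and a minimum-separation fact (your claim that some $|c_{s^\star}|\ge 2\alpha(1-2\eta)$, hence $D\ge 2\alpha(1-2\eta)$, is exactly the paper's Lemma that $\mathbf{H}$ is invertible with integer entries, so $\|\mathbf{H}(\mathbf{u}-\mathbf{v})\|_2\ge 2$, rescaled by $\alpha(1-2\eta)$). The genuine gap is in the last step, and you partly flag it yourself. The paper first replaces the aggregated read noise by a Gaussian of matched variance (a CLT modeling step in Section IV-A), after which the pairwise error is exactly $Q\bigl(\|\mathbf{H}(\mathbf{u}-\mathbf{v})\|_2/(2\sigma)\bigr)\le \exp\bigl(-1/(2\sigma^2)\bigr)$, and the constant $8\eta(1-\eta)$ in \eqref{recrec} drops out. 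You instead keep the exact $\pm 1$ reads and invoke Bernstein, but Bernstein gives exponent $\frac{t^2}{2V+\tfrac{2}{3}Bt}$ with $t=D^2/2$, $V=4\alpha(M+K-1)\eta(1-\eta)D^2$, and increment bound $B$ of order $\alpha(1-2\eta)$ (up to $\alpha(1-2\eta)(M-1)$ for far-apart pairs). The linear term therefore contributes order $\alpha(1-2\eta)D^2$ to the denominator, which is \emph{not} dominated by $V$ when $\eta$ is small relative to $1/(M+K)$ --- precisely the regime where the variance-aware constant is the whole point. So the stated pairwise bound $\exp\bigl(-D^2/(32\alpha(M+K-1)\eta(1-\eta))\bigr)$, and hence the exact constant in \eqref{recrec}, does not follow from Bernstein as invoked; the verification you defer actually fails in general. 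To close the proof you must either adopt the paper's Gaussian noise model (in which case your computation reproduces the paper's proof essentially verbatim), or carry out a genuinely variance-sensitive binomial Chernoff/Bennett (KL-divergence) analysis and accept either a degraded constant or an extra condition relating $\alpha$, $\eta$, $M+K$.

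A second, smaller hole is in the case analysis producing $s^\star$: when $\Delta=[M]$ (every unknown entry flipped) and $\sum_{m\in\Delta}X_{m,n}\neq 0$, there is no $s^\star\notin\Delta$, so your prescription does not apply. The claim is still true for $M\ge 2$: if $\bigl|\sum_m X_{m,n}\bigr|\ge 2$ every $s$ works, and if the sum equals $\pm 1$ one can pick $s^\star$ with $X_{s^\star,n}$ of the opposite sign (such an index exists), giving $|c_{s^\star}|\ge 4\alpha(1-2\eta)$. The paper's route avoids this casework entirely: $\mathbf{H}(\mathbf{u}-\mathbf{v})$ is twice the image of a nonzero vector in $\{0,\pm 1\}^M$ under an invertible integer matrix, hence a nonzero vector with integer entries of norm at least $2$.
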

	
	For the privacy condition to be satisfied, we have the following theorem.
	\begin{theorem}\label{thm2}
		In the all-but-one scheme with constant coverage depth, the privacy constraint~\eqref{pri} is satisfied if
		\begin{align} \label{ineqthm3}
		f(K)\triangleq\frac{H\left(\text{Bin}\left(M-1+K,p_n\right)\right)-H(\text{Bin}\left(K,p_n\right))}{M-1}\leq\beta
		\end{align}
		holds for every $n\in [N]$, where $p_n$ is the major allele frequency for SNP position $n$,  $H(\cdot)$ refers to the discrete  entropy function, and $\text{Bin}(\cdot,\cdot)$ refers to the binomial random variable.
	\end{theorem}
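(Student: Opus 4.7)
The plan is to decompose the mutual information position by position, then at each position reduce to a count-based sufficient statistic and apply the data-processing inequality.

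First, under Assumption~1 (each fragment contains at most one SNP) and the independence of SNPs across positions for both $\mathbf{X}$ and $\mathbf{Y}$, the reads $R_n^{(s)}$ at distinct positions are generated from disjoint, independent portions of the data. Hence the triples $(\mathbf{X}_{:,n},\mathbf{Y}_{:,n},R_n^{(s)})$ are mutually independent across $n\in[N]$, and the chain rule gives
\begin{equation*}
I\left(X_{m,n},\ m\in\mathcal{A}^{(s)},\ n\in[N];\ \mathcal{R}^{(s)}\right)=\sum_{n=1}^{N}I\left(\mathbf{X}_{\mathcal{A}^{(s)},n};\ R_n^{(s)}\right).
\end{equation*}
It therefore suffices to show $I(\mathbf{X}_{\mathcal{A}^{(s)},n};R_n^{(s)})\le (M-1)\beta$ for every $n$.

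Fix $n$ and write $U\triangleq\mathbf{X}_{\mathcal{A}^{(s)},n}$, $V\triangleq\mathbf{Y}_{:,n}$. At position $n$, sequencer $s$ observes $(M-1+K)\alpha$ reads, with $\alpha$ contributed by each of the $M-1$ unknown individuals in $\mathcal{A}^{(s)}$ and $\alpha$ by each of the $K$ known individuals; crucially, the sequencer cannot tell which fragment came from which individual, so the fragments are \emph{a priori} exchangeable. The multiset of reads is therefore a sufficient statistic, and since reads are $\pm 1$-valued it is captured by the single count $W_n^{(s)}$ of $+1$ reads. Let $T_U\triangleq\sum_{m\in\mathcal{A}^{(s)}}\ind\{U_m=1\}$, $T_V\triangleq\sum_{k=1}^{K}\ind\{V_k=1\}$, and $T\triangleq T_U+T_V$. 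Conditional on $T$, the variable $W_n^{(s)}$ is distributed as $\text{Bin}(T\alpha,1-\eta)+\text{Bin}((M-1+K-T)\alpha,\eta)$ (two independent binomials), a law that depends on $(U,V)$ only through $T$; in particular $W_n^{(s)}\perp U\mid T$. Consequently
\begin{equation*}
I(U;R_n^{(s)})=I(U;W_n^{(s)})\le I(U;W_n^{(s)},T)=I(U;T)+I(U;W_n^{(s)}\mid T)=I(U;T).
\end{equation*}

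Finally, since the entries of $U$ and $V$ are i.i.d.\ Bernoulli$(p_n)$, $T\sim\text{Bin}(M-1+K,p_n)$, and conditional on $U$ the difference $T-T_U=T_V$ is independent of $U$ and distributed as $\text{Bin}(K,p_n)$. Hence
\begin{equation*}
I(U;T)=H(T)-H(T\mid U)=H(\text{Bin}(M-1+K,p_n))-H(\text{Bin}(K,p_n))=(M-1)\,f(K).
\end{equation*}
Summing over $n$, using $|\mathcal{A}^{(s)}|=M-1$, and invoking the hypothesis $f(K)\le\beta$ for every $n$ yields the desired privacy bound. The main technical obstacle lies in the middle paragraph: one must cleanly justify the sufficient-statistic reduction, i.e.\ that the sequencer's reads at position $n$ depend on $U$ only through the aggregate count $T$, which rests on the physical indistinguishability (exchangeability) of the pooled fragments together with the symmetry of the i.i.d.\ reading noise.
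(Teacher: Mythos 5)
Your proof is correct and follows essentially the same route as the paper: decompose the mutual information across SNP positions, reduce the sequencer's observation at each position to a single pooled statistic, pass through the aggregate allele count $T$ (the paper's $C\left(X_{m,n},m\in\mathcal{A}^{(s)}\right)$, up to an affine bijection) via the data-processing inequality, and compute $H(\text{Bin}(M-1+K,p_n))-H(\text{Bin}(K,p_n))$. The only difference is cosmetic but welcome: you work with the exact read count $W_n^{(s)}$ and its binomial-mixture law rather than the paper's Gaussian-approximated $q_n^{(s)}$, which justifies the same Markov chain without invoking the CLT model.
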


	The main message of these results is that we can choose the parameters of the proposed scheme such that both conditions are satisfied, simultaneously.  In other words, these theorems confirm that the separation of the reading phase and the processing phase, together with using more than one sequencer and adding the sequences of known individuals to the pool, offers enough flexibility to satisfy both conditions at the same time; based on~\eqref{ineqthm3}, $K$ is chosen, and then using \eqref{recrec}, $\alpha$ is set. 
	
	 For more discussions on these results refer to~\ref{discuss1}.
	
	\subsection{Mathematical Model in Data Collector in the All-But-One Scheme} \label{mathdata}
	
	For any SNP position $n\in [N]$, the objective for the data collector is to estimate the vector $\mathbf{x}_n=\left[X_{1,n}, X_{2,n},\cdots, X_{M,n}\right]^T$.
	
	In the following, we will argue that in the all-but-one scenario, the knowledge provided by the sequencers to the data collector in SNP position $n$, can be modeled by a set of noisy linear equations as
	\begin{align}\label{modeleqn}
	\mathbf{g}_n=\mathbf{H}\mathbf{x}_n+\mathbf{z}_n,
	\end{align}
	where $\mathbf{H} = \mathbf{1}_{M\times M} - \mathbf{I}_{M\times M}$, and  $\mathbf{g}_n$ and $\mathbf{z}_n$ are $M\times 1$ vectors and $\mathbf{z}_n\sim\mathcal{N}(0,\sigma^2 \mathbf{I})$ in which
	\begin{align} \label{sigma2}
	\sigma^2\triangleq\frac{4(M+K-1)}{\alpha}\frac{\eta(1-\eta)}{(1-2\eta)^2}.
	\end{align}
In addition,  $\mathbf{z}_n$ and $\mathbf{x}_n$ are uncorrelated. From this set of noisy equations, the objective at the data collector is to reconstruct $\mathbf{x}_n$.
	
	For now, let us focus on the first sequencer. As mentioned before, fragments of unknown individuals indexed $2$ to $M$ are sent to sequencer $1$, as well as fragments of all known individuals. These fragments are pooled and have no tags regarding the index of individuals. So the information that the data collector receives is the number of fragments containing major (or minor) alleles. Note that the fragments are categorized by their corresponding SNP. Thus, in SNP position $n\in [N]$, it receives the result of the following sum
	\begin{align} \label{new}
	&\sum_{i=1}^{\alpha} \tilde{X}^{(1)}_{2,n,i} + \cdots +  \sum_{i=1}^{\alpha} \tilde{X}^{(1)}_{M,n,i}\nonumber\\
	+&\sum_{i=1}^{\alpha} \tilde{Y}^{(1)}_{1,n,i} + \cdots + \sum_{i=1}^{\alpha} \tilde{Y}^{(1)}_{K,n,i},
	\end{align}
	where $\tilde{X}^{(1)}_{m,n,i}$ and $\tilde{Y}^{(1)}_{k,n,i}$ are the noisy versions of $X_{m,n}$ and $Y_{k,n}$ after the reading phase in which there is an $\eta$ probability of error  in reading each fragment. Also $i$ refers to the index of fragments. As mentioned before, the data collector knows the sequence of known individuals a--priori. For known individual $k\in [K]$ and SNP position $n\in [N]$, we indicate the known alleles by $y_{k,n}\in \{-1,1\}$. Consequently for $\forall i\in [\alpha]$ we have
	\begin{align}
	\tilde{X}^{(1)}_{m,n,i} &=\left\{
	\begin{array}{ll}
	X_{m,n},   & \mathrm{w.p.}\quad 1-\eta \\‎ 
	-X_{m,n},  & \mathrm{w.p.}\quad \eta,
	\end{array}\right.\label{XXX}\\
	\tilde{Y}^{(1)}_{k,n,i} &=\left\{
	\begin{array}{ll}
	y_{k,n},   & \mathrm{w.p.}\quad 1-\eta \\‎ 
	-y_{k,n},  & \mathrm{w.p.}\quad \eta.
	\end{array}\right.
		\label{eq:what}
 	\end{align}
	
	After scaling~\eqref{new}, and subtracting  $\sum_{k=1}^{K} y_{k,n}$, we can assume that the data collector calculates $G^{(1)}_n$, defined as 
\begin{align}
	G^{(1)}_n \triangleq &\frac{1}{\alpha (1-2\eta)}\sum_{i=1}^{\alpha} \tilde{X}^{(1)}_{2,n,i} + \cdots +  \frac{1}{\alpha (1-2\eta)}\sum_{i=1}^{\alpha} \tilde{X}^{(1)}_{M,n,i}\nonumber\\
	+& \frac{1}{\alpha (1-2\eta)}\sum_{i=1}^{\alpha} \tilde{Y}^{(1)}_{1,n,i} + \cdots + \frac{1}{\alpha (1-2\eta)}\sum_{i=1}^{\alpha} \tilde{Y}^{(1)}_{K,n,i}\nonumber\\
	-&\sum_{k=1}^{K} y_{k,n}, \label{main1}
	\end{align}
Note that subtracting $\sum_{k=1}^{K} y_{k,n}$ in the above equation is fine, because the full knowledge of matrix $\mathbf{Y}$ is available at the data collector.

	In what follows, we calculate the mean and variance of $\tilde{X}^{(1)}_{m,n,i}$  conditioned on $X_{m,n}$
	\begin{align}
	\mathbb{E}\left(\tilde{X}^{(1)}_{m,n,i}\ |\ X_{m,n}\right)&=X_{m,n} (1-\eta)+(-X_{m,n})\eta,\nonumber\\
	&=(1-2\eta) X_{m,n}\\
	\mathrm{Var}\left(\tilde{X}^{(1)}_{m,n,i}\ |\ X_{m,n}\right)&=\mathbb{E}\left(\left(\tilde{X}^{(1)}_{m,n,i}\right)^2\ |\ X_{m,n}\right)\nonumber\\
	&-\left(\mathbb{E}\left(\tilde{X}^{(1)}_{m,n,i}\ |\ X_{m,n}\right)\right)^2\nonumber\\
	&=1-\left((1-2\eta) X_{m,n}\right)^2\nonumber\\
	&=4\eta (1-\eta),
	\end{align}
	in which the last inequality results from $X^2_{m,n}=1$. Then using the MMSE estimate and orthogonality principle, we can write
	\begin{align} \label{model1}
	\tilde{X}^{(1)}_{m,n,i}=(1-2\eta)X_{m,n}+Z^{(1)}_{m,n,i},
	\end{align} 
	where $Z^{(1)}_{m,n,i}$ is a random variable with $\mathbb{E}\left(Z^{(1)}_{m,n,i}\right)=0$ and $\textrm{Var}\left(Z^{(1)}_{m,n,i}\right)=4\eta (1-\eta)$. Also $Z^{(1)}_{m,n,i}$ and $X_{m,n}$ are uncorrelated.
	Consequently
	\begin{align} \label{new1}
	\frac{1}{\alpha (1-2\eta)}\sum_{i=1}^{\alpha} \tilde{X}^{(1)}_{m,n,i}=X_{m,n}+\frac{\sum_{i=1}^{\alpha}Z^{(1)}_{m,n,i}}{\alpha (1-2\eta)}.
	\end{align}
	Based on central limit theorem $\frac{\sum_{i=1}^{\alpha}Z^{(1)}_{m,n,i}}{\sqrt{\alpha}}$ converges in distribution to a normal distribution with zero mean and variance $4\eta(1-\eta)$. Thus
	\begin{align}
	\frac{\sum_{i=1}^{\alpha}Z^{(1)}_{m,n,i}}{\alpha (1-2\eta)}=\frac{1}{\sqrt{\alpha}(1-2\eta)}\frac{\sum_{i=1}^{\alpha}Z^{(1)}_{m,n,i}}{\sqrt{\alpha}}
	\end{align}
	converges in distribution to a normal distribution with zero mean and variance $\frac{4\eta(1-\eta)}{\alpha(1-2\eta)^2}$. Similarly
	\begin{align}
	\mathbb{E}\left(\tilde{Y}^{(1)}_{k,n,i}\ |\ y_{k,n}\right)&=y_{k,n} (1-\eta)+(-y_{k,n})\eta\nonumber\\
	&=(1-2\eta) y_{k,n}\\
	\mathrm{Var}\left(\tilde{Y}^{(1)}_{k,n,i}\ |\ y_{k,n}\right)&=\mathbb{E}\left(\left(\tilde{Y}^{(1)}_{k,n,i}\right)^2\ |\ y_{k,n}\right)\nonumber\\
	&-\left(\mathbb{E}\left(\tilde{Y}^{(1)}_{k,n,i}\ |\ y_{k,n}\right)\right)^2\nonumber\\
	&=1-\left((1-2\eta) y_{k,n}\right)^2\nonumber\\
	&=4\eta (1-\eta).
	\end{align}
	Therefore, we can write
	\begin{align} \label{model2}
	\tilde{Y}^{(1)}_{k,n,i}=(1-2\eta)y_{k,n}+\tilde{Z}^{(1)}_{k,n,i},
	\end{align}
	where $\tilde{Z}^{(1)}_{k,n,i}$ is a random variable with $\mathbb{E}\left(\tilde{Z}^{(1)}_{k,n,i}\right)=0$ and $\textrm{Var}\left(\tilde{Z}^{(1)}_{k,n,i}\right)=4\eta (1-\eta)$.
	Thus we have
	\begin{align} \label{new2}
	\frac{1}{\alpha (1-2\eta)}\sum_{i=1}^{\alpha} \tilde{Y}^{(1)}_{k,n,i}=y_k+\frac{\sum_{i=1}^{\alpha}\tilde{Z}^{(1)}_{k,n,i}}{\alpha (1-2\eta)},
	\end{align}
	Similar to $\frac{\sum_{i=1}^{\alpha}Z^{(1)}_{m,n,i}}{\alpha (1-2\eta)}$, $\frac{\sum_{i=1}^{\alpha}\tilde{Z}^{(1)}_{k,n,i}}{\alpha (1-2\eta)}$ converges in distribution to a normal distribution with zero mean and variance $\frac{4\eta(1-\eta)}{\alpha(1-2\eta)^2}$.
	
	Consequently, using~\eqref{new1} and~\eqref{new2} we can rewrite~\eqref{main1} as
	\begin{equation} \label{1111}
	G^{(1)}_n = X_{2,n} + \cdots + X_{M,n} +Z^{(1)}_{n},
	\end{equation}
	where $Z^{(1)}_{n}\sim\mathcal{N}\left(0,\sigma^2\right)$ in which $\sigma^2$ is defined in~\eqref{sigma2}.
		
	Using the same argument for all sequencers, we can see  that~\eqref{modeleqn} holds, for  $\mathbf{g}_n\triangleq\left[G^{(1)}_n,G^{(2)}_n,\cdots,G^{(M)}_n\right]^T$  and the noise vector $\mathbf{z}_n\triangleq\left[Z_n^{(1)},Z_n^{(2)}, \cdots, Z_n^{(M)}\right]^T$.
	
	\subsection{Mathematical Model in the Sequencers in the All-But-One Scheme} \label{mathdata2}
	
	Without loss of generality, again, consider sequencer $1$. It is important to note that the sequencer cannot differentiate between known and unknown individuals. In other words, all individuals are unknown for all sequencers. For the sequencer $1$, $\tilde{Y}^{(1)}_{k,n,i}$ for $\forall i\in[\alpha]$ can be written as 
	
	\begin{equation}
	\tilde{Y}^{(1)}_{k,n,i} =\left\{
	\begin{array}{ll}
	Y_{k,n},   & \text{w.p.}\quad 1-\eta \\‎ 
	-Y_{k,n},  & \text{w.p.}\quad \eta,
	\end{array}\right.
	\end{equation}
	 $\tilde{X}^{(1)}_{m,n,i}$ is also defined as~\eqref{XXX}. Sequencer 1 receives the summation in~\eqref{new}. Scaling this summation, Sequencer 1 receives $q^{(1)}_n$, defined as
	 \begin{align}
	 q^{(1)}_n \triangleq &\frac{1}{\alpha (1-2\eta)}\sum_{i=1}^{\alpha} \tilde{X}^{(1)}_{2,n,i} + \cdots +  \frac{1}{\alpha (1-2\eta)}\sum_{i=1}^{\alpha} \tilde{X}^{(1)}_{M,n,i}\nonumber\\
	 +& \frac{1}{\alpha (1-2\eta)}\sum_{i=1}^{\alpha} \tilde{Y}^{(1)}_{1,n,i} + \cdots + \frac{1}{\alpha (1-2\eta)}\sum_{i=1}^{\alpha} \tilde{Y}^{(1)}_{K,n,i},
	 \end{align}
	 Following the same steps as in the previous subsection, we can argue that $q^{(1)}_n$ can be written as
	\begin{align} \label{sequencer model}
	q^{(1)}_n& \triangleq X_{2,n}+X_{3,n}+\cdots +X_{M,n}\nonumber\\
	&+Y_{1,n}+Y_{2,n}+\cdots+Y_{K,n}+\tilde{Z}^{(1)}_n,
	\end{align} 
	where $\tilde{Z}^{(1)}_n\sim\mathcal{N}\left(0,\sigma^2\right)$ and $\sigma^2$ is determined in~\eqref{sigma2}.
	
	\subsection{Proof of Theorem \ref{thm1}} \label{proof11}
	Using the mathematical model introduced in~\eqref{modeleqn}, we are ready to provide the proof of Theorem \ref{thm1}.
	\begin{proof}
		Consider SNP position $n$. $\hat{\mathbf{x}}_n$ is the estimate of $\mathbf{x}_n$ by the data collector, based on minimum mean square error (MMSE) estimation. The probability of error is written as
		\begin{align}
		\mathbb{P}(\textrm{error}) = \mathbb{P}(\hat{\mathbf{x}}_n\neq\mathbf{x}_n)=
		\sum_{\mathbf{u}} \mathbb{P}(\mathbf{x}_n=\mathbf{u}) \mathbb{P}(\hat{\mathbf{x}}_n\neq\mathbf{u}\ |\ \mathbf{x}_n=\mathbf{u}).
		\end{align}
		Using union bound and substituting $\mathbb{P}(\hat{\mathbf{x}}_n\neq\mathbf{u}\ |\ \mathbf{x}_n=\mathbf{u})$ with $\mathbb{P}(\textrm{error}|\mathbf{u})$ we have
		\begin{align}
		\mathbb{P}(\textrm{error}|\mathbf{u}) &\leq \sum_{\mathbf{v}\neq \mathbf{u}} \mathbb{P}(\hat{\mathbf{x}}_n=\mathbf{v}\ |\ \mathbf{x}_n=\mathbf{u})\nonumber\\
		&=\sum_{\mathbf{v}\neq \mathbf{u}} \mathbb{P} \left(||\mathbf{g}_n-\mathbf{H}\mathbf{v}||_2 \leq ||\mathbf{g}_n-\mathbf{H}\mathbf{u}||_2 \ |\ \mathbf{x}_n=\mathbf{u}\right)\nonumber\\
		&=\sum_{\mathbf{v}\neq \mathbf{u}} \mathbb{P} (||\mathbf{H}(\mathbf{u}-\mathbf{v})+\mathbf{z}_n||_2 \leq ||\mathbf{z}_n||_2  \ |\ \mathbf{x}_n=\mathbf{u})\nonumber\\
		&=\sum_{\mathbf{v}\neq \mathbf{u}} \mathbb{P} \left((\mathbf{H}(\mathbf{u}-\mathbf{v}))\cdot\mathbf{z}_n \leq -\frac{||\mathbf{H}(\mathbf{u}-\mathbf{v})||_2^2}{2} \ |\ \mathbf{x}_n=\mathbf{u}\right)\nonumber\\
		&\stackrel{(a)}{=} \sum_{\mathbf{v}\neq \mathbf{u}} Q\left(\frac{||\mathbf{H}(\mathbf{u}-\mathbf{v})||_2}{2\sigma}\right)\nonumber\\
		&\stackrel{(b)}{\leq}2^M Q\left(\frac{1}{\sigma}\right)\nonumber\\
		&\stackrel{(c)}{=}2^M Q\left(\sqrt{\frac{\alpha (1-2\eta)^2}{4(M+K-1)\eta (1-\eta)}}\right),
		\label{eqn2}
		\end{align}
		where $(a)$ is due to the fact that
		\begin{align}
		(\mathbf{H}(\mathbf{u}-\mathbf{v}))\cdot\mathbf{z}_n\ \text{conditioned on}\ \mathbf{x}_n=\mathbf{u} \sim\mathcal{N}(0,||\mathbf{H}(\mathbf{u}-\mathbf{v})||_2^2\sigma^2),
		\end{align}
		and $(b)$ is due to the fact that the minimum norm  of $\mathbf{H}(\mathbf{u}-\mathbf{v})$ is $2$, proved in the following lemma. 
		
		\begin{lemma} \label{lemma}
			The least norm of $\mathbf{H}\mathbf{x}$ for non-zero $\mathbf{x}$ when $\mathbf{x}\in \{0,-1,1\}^{M\times 1}$, equals $1$.
		\end{lemma}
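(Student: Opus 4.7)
The plan is to exploit the rank-one-plus-identity structure $\mathbf{H} = \mathbf{1}\mathbf{1}^T - \mathbf{I}$ and reduce the statement to a one-line quadratic identity. First I would introduce the auxiliary scalar $s \triangleq \sum_{j=1}^{M} x_j$, so that the $i$-th coordinate reads simply $(\mathbf{H}\mathbf{x})_i = s - x_i$. This collapses the matrix action into a single scalar shift and makes the sum of squares transparent.

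Next I would expand directly,
\begin{align*}
\|\mathbf{H}\mathbf{x}\|_2^{2} \;=\; \sum_{i=1}^{M}(s-x_i)^2 \;=\; Ms^2 \,-\, 2s\!\sum_{i=1}^{M}x_i \,+\, \sum_{i=1}^{M} x_i^2 \;=\; (M-2)\,s^2 \,+\, \|\mathbf{x}\|_2^{2},
\end{align*}
using $\sum_i x_i = s$ in the middle step. This identity is essentially the whole content of the lemma.

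Since $M \geq 2$, the first term $(M-2)s^2$ is nonnegative, giving $\|\mathbf{H}\mathbf{x}\|_2^{2} \geq \|\mathbf{x}\|_2^{2}$. For any nonzero $\mathbf{x}\in\{-1,0,1\}^{M}$, the quantity $\|\mathbf{x}\|_2^{2}$ equals the number of nonzero coordinates of $\mathbf{x}$ and is therefore a positive integer, so $\|\mathbf{H}\mathbf{x}\|_2 \geq 1$. This is exactly the bound invoked at step $(b)$ in the proof of Theorem~\ref{thm1}, where the conclusion is applied to $\tfrac{1}{2}(\mathbf{u}-\mathbf{v}) \in \{-1,0,1\}^M$ and yields the factor of $2$ appearing there. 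Achievability at $1$ follows from taking $M=2$ with $\mathbf{x}$ a standard basis vector. There is no real obstacle; the only thing to be careful about is keeping the $M-2$ coefficient correct in the expansion, because it is precisely the condition $M\geq 2$ that makes the first term nonnegative and the bound kick in.
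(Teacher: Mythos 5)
Your proof is correct, and it takes a genuinely different route from the paper's. The paper argues abstractly: $\mathbf{H}$ is invertible, so $\mathbf{H}\mathbf{x}\neq 0$ for $\mathbf{x}\neq 0$; since $\mathbf{H}$ and $\mathbf{x}$ have integer entries, $\mathbf{H}\mathbf{x}$ is a nonzero integer vector and hence has norm at least $1$. You instead exploit the specific structure $\mathbf{H}=\mathbf{1}\mathbf{1}^T-\mathbf{I}$ via the identity $\|\mathbf{H}\mathbf{x}\|_2^2=(M-2)s^2+\|\mathbf{x}\|_2^2$ with $s=\sum_j x_j$, which for $M\geq 2$ gives the stronger bound $\|\mathbf{H}\mathbf{x}\|_2\geq\|\mathbf{x}\|_2\geq 1$. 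What each buys: the paper's integrality argument is shorter and would apply to any invertible integer matrix, but it leans on an invertibility claim it does not verify (true for $M\geq 2$, where the eigenvalues of $\mathbf{H}$ are $M-1$ and $-1$; false for $M=1$, where $\mathbf{H}=0$); your computation sidesteps invertibility entirely, makes the role of the assumption $M\geq 2$ (stated in Section \ref{sec3}) explicit, and even quantifies how loose the lemma is --- your identity shows the true minimum is $1$ only for $M=2$ and $\sqrt{2}$ for $M\geq 3$, whereas the paper's proof (like yours) only establishes, and the application in step $(b)$ of Theorem \ref{thm1} only needs, the lower bound of $1$ applied to $\tfrac{1}{2}(\mathbf{u}-\mathbf{v})\in\{-1,0,1\}^M$. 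Your argument also transfers directly to the rectangular $\mathbf{H}$ of Section \ref{sec:random channel}, since there $\mathbf{u}-\mathbf{v}$ is supported on the first $M$ coordinates and the same scalar-shift computation applies.
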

		\begin{proof}
			$\mathbf{H}$ is an invertible matrix. So, for any $\mathbf{x}\neq 0$, $\mathbf{H}\mathbf{x}$ can not be zero. Since all the entries in $\mathbf{H}$ and $\mathbf{x}$ are integers, every entry of  $\mathbf{H}\mathbf{x}$ is also an integer. Therefore, for any non-zero $\mathbf{x}$, $\mathbf{H}\mathbf{x}$ at least has the magnitude $1$.
			 
		\end{proof}
		Note that since $\mathbf{u}-\mathbf{v}\in \{0,2,-2\}^{M\times 1}$, from the above lemma, $\mathbf{H}(\mathbf{u}-\mathbf{v})$ has the minimum magnitude of $2$.
		
		$(c)$ is due to substituting $\sigma$. Therefore
		\begin{align}
		\mathbb{P}(\textrm{error})&\leq 2^M \sum_{\mathbf{u}} \mathbb{P}(\mathbf{x}_n=\mathbf{u}) Q\left(\sqrt{\frac{\alpha (1-2\eta)^2}{4(M+K-1)\eta (1-\eta)}}\right)\\
		&=2^M Q\left(\sqrt{\frac{\alpha (1-2\eta)^2}{4(M+K-1)\eta (1-\eta)}}\right)\\
		&\leq 2^M \exp\left(-\frac{\alpha (1-2\eta)^2}{8(M+K-1)\eta (1-\eta)}\right),
		\end{align}
		where the last inequality follows from $Q(x)\leq\exp(-\frac{x^2}{2}),$ $\forall x\geqslant 0$. For $\mathbb{P}(\textrm{error})$ to be less than $\epsilon$, it is sufficient for the last inequality to be less than $\epsilon$ and then the theorem follows.
		
	\end{proof}
	
	The theorem states that for reconstruction condition to hold, the variance of noise should be kept small enough and this is done by adjusting the value for $\alpha$. Therefore, we can conclude that $\alpha$ is an error-resistant parameter that can be adjusted based on our need.
	
	\subsection{Proof of Theorem \ref{thm2}} \label{proof12}
	Using the mathematical model introduced in~\eqref{sequencer model}, we are ready to provide the proof of Theorem \ref{thm2}.
	\begin{proof}
		The fact is that for each sequencer $s$, $\mathcal{R}^{(s)}$ is equivalent to $q^{(s)}_n$, $\forall n \in [N]$. The reason is that fragments contain just one SNP and are grouped based on their containing SNP position and in the group containing SNP position $n$, the information is stored in $q^{(s)}_n$. Thus we have
		\begin{align}
			\mathbb{P}\left(X_{m,n},m\in\mathcal{A}^{(s)},n\in [N]\ |\ \mathcal{R}^{(s)}\right)&=\prod_{n=1}^{N}\mathbb{P}\left(X_{m,n},m\in\mathcal{A}^{(s)}\ |\ q^{(s)}_n\right).
		\end{align}
		Due to independence of entries in $\mathbf{X}$, we have
		\begin{align}
			\mathbb{P}\left(X_{m,n},m\in\mathcal{A}^{(s)},n\in [N]\right)=\prod_{n=1}^{N}\mathbb{P}\left(X_{m,n},m\in\mathcal{A}^{(s)}\right).
		\end{align}
		Based on the last two equalities
		\begin{align}
			I\left(X_{m,n},m\in\mathcal{A}^{(s)},n\in [N];\mathcal{R}^{(s)}\right)=\sum_{n=1}^{N}I\left(X_{m,n},m\in\mathcal{A}^{(s)};q^{(s)}_n\right).
		\end{align}
		Thus, for privacy condition~\eqref{pri} to be satisfied, using~\eqref{mag}, it is sufficient for every $n\in [N]$ to satisfy
		\begin{align} \label{r3}
		\frac{I\left(X_{m,n},m\in\mathcal{A}^{(s)};q^{(s)}_n\right)}{M-1}\leq \beta.
		\end{align}
		
		Without loss of generality, we have assumed that $s=1$. 
		
		Let
		\begin{align}
		C\left(X_{m,n},m\in\mathcal{A}^{(1)}\right)\triangleq\sum_{m=2}^{M}X_{m,n}+\sum_{k=1}^{K}Y_{k,n}.
		\end{align}
		Based on~\eqref{sequencer model}, we have the following Markov chain  
		\begin{align}
		X_{m,n},m\in\mathcal{A}^{(1)} \rightarrow C\left(X_{m,n},m\in\mathcal{A}^{(1)}\right) \rightarrow q^{(1)}_n.
		\end{align}
		Due to the data processing inequality we have
		\begin{align} \label{r2}
			I\left(X_{m,n},m\in\mathcal{A}^{(1)};q^{(1)}_n\right) \leq I\left(X_{m,n},m\in\mathcal{A}^{(1)};C\left(X_{m,n},m\in\mathcal{A}^{(1)}\right)\right).
		\end{align}
		We can write the right-hand side of the above inequality as
		\begin{align}
		I\left(X_{m,n},m\in\mathcal{A}^{(1)};C\left(X_{m,n},m\in\mathcal{A}^{(1)}\right)\right)=&H\left(C\left(X_{m,n},m\in\mathcal{A}^{(1)}\right)\right)-H\left(C\left(X_{m,n},m\in\mathcal{A}^{(1)}\right)\ |\ X_{m,n},m\in\mathcal{A}^{(1)}\right)\nonumber\\
		=&H\left(C\left(X_{m,n},m\in\mathcal{A}^{(1)}\right)\right)-H\left(\sum_{k=1}^{K}Y_{k,n}\right).
		\end{align}
		Let
		\begin{align}
		M_n\triangleq\sum_{k=1}^{K}Y_{k,n}.
		\end{align}
		Both $C\left(X_{m,n},m\in\mathcal{A}^{(1)}\right)$ and $M_n$ are binomial random variables. In fact
		\begin{align}
		C\left(X_{m,n},m\in\mathcal{A}^{(1)}\right)&\sim\text{Bin}\left(M-1+K,p_n\right)\\
		M_n&\sim\text{Bin}\left(K,p_n\right)
		\end{align}
		The proof is complete.
		
	\end{proof}
	
	\subsection{Discussion} \label{discuss1}
	In theorem above, $K$ is used to guarantee~\eqref{ineqthm3}. Recall that it is the number of added known individuals to the pool and as it becomes larger, the entropies in the numerator of~\eqref{ineqthm3} become closer to each other, so we can choose $K$ such that inequality~\eqref{ineqthm3} is satisfied for any given positive $\beta$.

	The formula for the entropy of binomial distribution is to some degree complicated. To be able to work with~\eqref{ineqthm3}, here, we use some approximations and confirm them by some simulation results. Still, if someone wants to do it exactly, they can use numerical methods to work with~\eqref{ineqthm3}. 
	
	\begin{remark}
		For large values of $K$, based on the central limit theorem, we can use the normal distribution to approximate the entropies in~\eqref{ineqthm3}. Our simulations show that for all values of $K$ (and $M$), the numerator in~\eqref{ineqthm3} is well approximated when we replace the binomial variables with their corresponding normal approximations (see Fig. \ref{fig12}).

		\begin{figure}
			\centering
			\hspace*{-0.7cm}
			\includegraphics[width=0.5\linewidth]{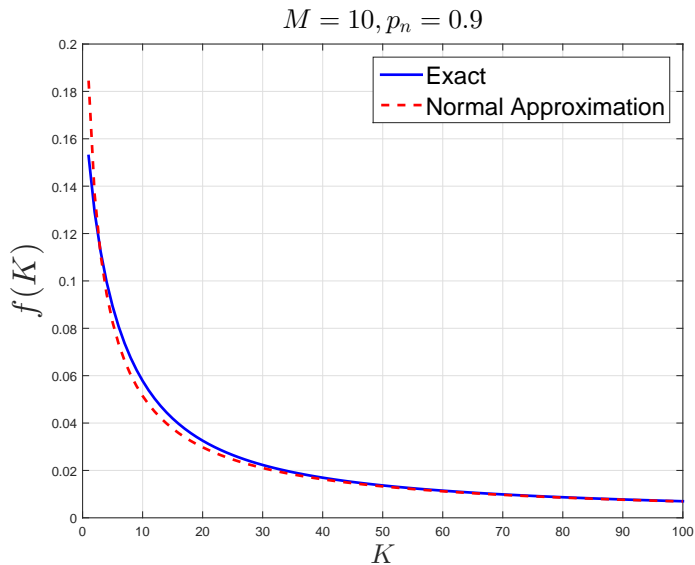}
			\caption{Comparing exact and approximate curves for $f(K)$ in~\eqref{ineqthm3} for different values of $K$ for the case $M=10$ and $p_n=0.9$. As it is seen, the two curves are pretty much the same.}
			\label{fig12}
		\end{figure}
		Approximating the binomial distributions in~\eqref{ineqthm3} by normal distributions, we can rewrite~\eqref{ineqthm3}  as
		\begin{align} \label{qqq}
			f(K)\cong&\frac{1}{2(M-1)}\left(\log\left(2\pi e \left(M-1+K\right)p_n(1-p_n)\right)\right)\nonumber\\
			-&\frac{1}{2(M-1)}\left(\log\left(2\pi e \left(K\right)p_n(1-p_n)\right)\right)\nonumber\\
			=&\frac{1}{2(M-1)}\log\left(\frac{M-1+K}{K}\right).
		\end{align}
	\end{remark}
	
	\begin{remark}
		Based on~\eqref{qqq} and~\eqref{ineqthm3}, we have
		\begin{align} \label{app}
			\frac{M-1+K}{K}\leq\exp\left(\frac{2(M-1)}{\log e}\beta\right).
		\end{align}
		For the case where $\beta\ll 1$, we have
		\begin{align}
			\exp\left(\frac{2(M-1)}{\log e}\beta\right)\cong 1+\frac{2(M-1)}{\log e}\beta.
		\end{align}
		Thus, from~\eqref{app} we have
		\begin{align} \label{MVF}
			K\geq\frac{\log e}{2\beta}.
		\end{align}
		The result is very interesting. It shows that for small values of $\beta$, the number of known individuals needed to preserve the privacy condition is inversely proportional to $\beta$, and has no connection to the number of unknown individuals $M$.
	\end{remark}
	
	\begin{remark}
		Recall that in the all-but-one scheme, $|\mathcal{A}^{(s)}|=M-1$. Here, we argue that this is a good choice. Let us assume that for each sequencer $s\in[S]$ we have
		\begin{align}
			|\mathcal{A}^{(s)}|=c.
		\end{align}
		Following the same math and using the normal approximation, for those cases, to guarantee privacy, we need to have
		\begin{align} \label{53}
			\frac{1}{2c}\log\left(\frac{c+K}{K}\right)\leq\beta.
		\end{align}
		Obviously, increasing $c$ will reduce the left-hand side of~\eqref{53}, so it is better to choose $c$ as large as possible. The maximum value for $c$ is $M$. The problem for $c=M$ is that it makes the reconstruction impossible because in this case, all entries in $\mathbf{H}$ are $1$ and it is not invertible. This justifies the choice of $c=M-1$ which makes not only $\mathbf{H}$ invertible but also helps to satisfy~\eqref{ineqthm3} to its maximum extent. Recall that in the proposed scheme we have two ideas: (1) using multiple sequencers, (2) adding known individuals to the pool. Using known individuals directly helps to preserve the privacy condition. On the other hand, using multiple sequencers is connected to both conditions: (i) it lets using the maximum number of unknown individuals in every sequencer to help preserve privacy, (ii) it satisfies the reconstruction condition by providing a set of equations leading to an invertible $\mathbf{H}$. 
	\end{remark}
	
	\begin{remark}
		Note that using~\eqref{ineqthm3} or approximately~\eqref{MVF}, together with~\eqref{recrec}, it is easy to see that we can always choose $K$ and $\alpha$ such that both conditions are satisfied. First, we choose $K$ such that~\eqref{ineqthm3} is satisfied, then based on that $K$, we use~\eqref{recrec} to achieve $\alpha$.
	\end{remark}
	
	\section{All-But-One Scheme with Random Coverage Depth}\label{sec:random channel}
	
	In the previous section, we assumed that coverage depth is constant for every individual and every SNP position; however in practice, it is not realistic. In this section, we solve the problem in the case of $\alpha^{(s)}_{m,n}$s and $\tilde{\alpha}^{(s)}_{k,n}$s to be random variables which is a more general and practical case. We assume that these variables are binomial random variables and approximate these variables with a normal distribution. Therefore, for $\forall n\in [N], \forall m\in [M], \forall k\in [K], \forall s\in [S], m\neq s$, we have
	\begin{align}
		\alpha^{(s)}_{m,n},\tilde{\alpha}^{(s)}_{k,n}\sim \mathcal{N}\left(\bar{\alpha}, \sigma_{\alpha}^{2}\right).
	\end{align}
	Due to the fact that coverage depths are countable variables and mostly have large values, we assumed that $\bar{\alpha}\in\mathbb{N}$. Also, following the all-but-one scheme, for $\forall n\in [N]$ and $\forall s\in [S]$, we have
	\begin{align}
	\alpha^{(s)}_{s,n}=0.
	\end{align}
	
	Like the previous section, first, we introduce our results. After that, the mathematical model and estimation rule are derived in Subsections \ref{recakhar} and  \ref{est}, respectively. Then, using the model and the estimation rule, the proof is established in Subsection \ref{recakhar}. We discuss the results in Subsection \ref{discuss2}.
	
	The following theorem provides a sufficient condition to satisfy the reconstruction condition.
	\begin{theorem}
		\label{thm3}
		In the all-but-one scheme, the reconstruction constraint~\eqref{rec} is satisfied if:
		\begin{align} \label{eqeq}
		\frac{\bar{\alpha}}{M+K-1} \geq \max\left( \frac{16\eta(1-\eta)}{(1-2\eta)^2}\ln\left(\frac{2^{M+1}}{\epsilon}\right),\right.\nonumber\\
		\left.2\sigma_\alpha\sqrt{\frac{\ln\left(\frac{2^{M+1}}{\epsilon}\right)}{M+K-1}}\right).
		\end{align}
	\end{theorem}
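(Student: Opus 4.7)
The plan is to adapt the proof of Theorem~\ref{thm1} by absorbing the randomness of the coverage depths into an extra \emph{modeling-error} term in the noise vector, and then to reapply the same MMSE union-bound analysis essentially verbatim. First I would extend the derivation of Subsection~\ref{mathdata} to the random-$\alpha$ case: because the data collector does not know the individual $\alpha_{m,n}^{(s)}$'s but does know their common mean $\bar{\alpha}$, it forms the same statistic as in~\eqref{main1} but scales by $\bar{\alpha}(1-2\eta)$ in place of $\alpha(1-2\eta)$. Writing $\alpha_{m,n}^{(s)}=\bar{\alpha}+\delta_{m,n}^{(s)}$ with $\delta\sim\mathcal{N}(0,\sigma_\alpha^2)$ (and analogously for the $\tilde{\alpha}$'s), this yields $\mathbf{g}_n=\mathbf{H}\mathbf{x}_n+\mathbf{b}_n+\mathbf{z}_n$, where $\mathbf{H}=\mathbf{1}_{M\times M}-\mathbf{I}_{M\times M}$ is unchanged, $\mathbf{z}_n$ is the (approximately Gaussian) reading-noise vector with per-entry variance $\sigma_1^2\triangleq\tfrac{4(M+K-1)\eta(1-\eta)}{\bar{\alpha}(1-2\eta)^2}$, and the additional modeling-error term $b_s=\sum_{m\neq s}\tfrac{\delta_{m,n}^{(s)}}{\bar{\alpha}}X_{m,n}+\sum_{k=1}^{K}\tfrac{\tilde{\delta}_{k,n}^{(s)}}{\bar{\alpha}}y_{k,n}$ is Gaussian with per-entry variance $\sigma_2^2\triangleq\tfrac{(M+K-1)\sigma_\alpha^2}{\bar{\alpha}^2}$ (using $X_{m,n}^2=y_{k,n}^2=1$).

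Next I would verify that $\mathbf{b}_n$ and $\mathbf{z}_n$ are uncorrelated---their conditional independence given the $\alpha$'s makes the cross-covariance vanish---and that the entries of $\mathbf{w}_n\triangleq\mathbf{b}_n+\mathbf{z}_n$ are independent across sequencers, since different $s$ involve disjoint $\delta$'s and reading errors. With this setup in place I would run the MMSE/union-bound/$Q$-function chain from the proof of Theorem~\ref{thm1} verbatim, again invoking Lemma~\ref{lemma} to get $\|\mathbf{H}(\mathbf{u}-\mathbf{v})\|_2\geq 2$, and arrive at $\mathbb{P}(\textrm{error})\leq 2^M\,Q(1/\sigma_{\textrm{tot}})\leq 2^M\exp\!\left(-1/(2\sigma_{\textrm{tot}}^2)\right)$ with $\sigma_{\textrm{tot}}^2=\sigma_1^2+\sigma_2^2$. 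The two expressions inside the $\max$ in~\eqref{eqeq} are algebraically equivalent to $\sigma_1^2\leq\tfrac{1}{4\ln(2^{M+1}/\epsilon)}$ and $\sigma_2^2\leq\tfrac{1}{4\ln(2^{M+1}/\epsilon)}$ respectively; summing them gives $\sigma_{\textrm{tot}}^2\leq\tfrac{1}{2\ln(2^{M+1}/\epsilon)}$, and the exponential bound then collapses to $2^M\cdot \epsilon/2^{M+1}=\epsilon/2\leq\epsilon$. The jump from the constant ``$8$'' in~\eqref{recrec} to ``$16$'' in~\eqref{eqeq}, and from $2^M$ to $2^{M+1}$, is precisely the price paid for splitting a single noise budget into two equal halves.

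The main obstacle is justifying that $\mathbf{w}_n$ may indeed be treated as a jointly Gaussian vector with i.i.d.\ entries of variance $\sigma_{\textrm{tot}}^2$. Under the paper's Gaussian approximation for the $\alpha$'s the modeling error $\mathbf{b}_n$ is exactly Gaussian conditional on $\mathbf{x}_n$, but the reading noise $\mathbf{z}_n$ is only approximately Gaussian, and its conditional variance $\tfrac{4\eta(1-\eta)T_s}{\bar{\alpha}^2(1-2\eta)^2}$ depends on the random total count $T_s=\sum_{m\neq s}\alpha_{m,n}^{(s)}+\sum_{k=1}^{K}\tilde{\alpha}_{k,n}^{(s)}$ at each sequencer. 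I would address this either by (i) invoking the same CLT approximation already used in Subsection~\ref{mathdata} and leaning on $\mathbb{E}[T_s]=(M+K-1)\bar{\alpha}$ together with the uncorrelatedness established above, or (ii) conditioning on the high-probability event $\{T_s\approx (M+K-1)\bar{\alpha}\ \text{for all }s\}$ and absorbing its complement into the $\epsilon/2$ slack already built into the final bound.
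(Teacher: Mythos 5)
Your proposal is correct, and it reaches the theorem through the same overall architecture as the paper (the perturbed linear model with $\tilde{\mathbf{H}}=\mathbf{H}+\Delta$, the minimum-distance estimator, the union bound, and Lemma~\ref{lemma} giving $\|\mathbf{H}(\mathbf{u}-\mathbf{v})\|_2\geq 2$), but it handles the coverage-depth randomness differently in the one step where the two arguments diverge. The paper keeps $\Delta\mathbf{u}$ and $\mathbf{z}_n$ separate in the error analysis: it conditions on $\Delta$, writes the pairwise error as $\mathbb{E}_\Delta\bigl[Q(\cdot)\bigr]$, and then bounds that expectation by splitting the integration range and completing the square (inequality~\eqref{ineq:lastlast}), which produces the two-term bound~\eqref{er} and hence three sufficient inequalities \eqref{ineq1}, \eqref{ineq2}, \eqref{ineq4}, one of which is then observed to be dominated. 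You instead fold $\Delta\mathbf{u}$ into the noise, using the fact---which the paper itself establishes in Subsection~\ref{est}---that conditioned on $\tilde{\mathbf{x}}_n=\mathbf{u}$ the combined vector $\Delta\mathbf{u}+\mathbf{z}_n$ is $\mathcal{N}(0,\tilde{\sigma}^2\mathbf{I})$ with $\tilde{\sigma}^2=\sigma^2+(M+K-1)\sigma_1^2$ (your $\sigma_{\mathrm{tot}}^2$); this gives the single-term bound $2^M\exp\bigl(-1/(2\tilde{\sigma}^2)\bigr)$, and your half-and-half budget split recovers exactly the two conditions inside the $\max$ in~\eqref{eqeq}, with a final guarantee of $\epsilon/2$. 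In fact your single $Q$-term is the exact evaluation of the paper's $\mathbb{E}_\Delta[Q(\cdot)]$ (averaging $Q(c+W)$ over $W\sim\mathcal{N}(0,s^2)$ gives $Q\bigl(c/\sqrt{1+s^2}\bigr)$), so your route is slightly tighter and avoids both the integral bound and the dominated inequality; what the paper's longer route buys is that it never has to assert joint Gaussianity of the combined noise inside the probability computation, only the conditional mean/covariance plus the Gaussianity of each piece. Your verification of the algebraic equivalences (the constant $16$, the $2^{M+1}$, and the $2\sigma_\alpha\sqrt{\cdot}$ condition) is accurate, and the caveat you flag about the reading-noise variance depending on the realized coverage is a modeling approximation the paper itself makes in Subsection~\ref{recakhar}, so you are on the same footing there.
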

	
	\begin{remark}
		For the privacy condition, Theorem \ref{thm2} is valid here as well. This will be discussed later in Subsection \ref{discuss2}.
	\end{remark}
	
	\subsection{Mathematical Model in Data Collector in the All-But-One Scheme} \label{recakhar}
	For any SNP position $n\in [N]$, the objective for the data collector is to estimate the vector $\mathbf{x}_n=\left[X_{1,n}, X_{2,n},\cdots, X_{M,n}\right]^T$. We define the extended vector $\tilde{\mathbf{x}}_n\triangleq\left[X_{1,n},\cdots,X_{M,n},y_{1,n},\cdots,y_{K,n}\right]^T$, where the last $K$ entries are known to the data collector. Therefore, for the data collector, estimating $\tilde{\mathbf{x}}_n$ is equivalent to estimating $\mathbf{x}_n$.
	
	In the following, we will argue that the knowledge provided by the sequencers to the data collector in SNP position $n$, can be modeled by a set of noisy linear equations as follows	
	\begin{align}\label{random channel}
	\mathbf{g}_n=\tilde{\mathbf{H}}\tilde{\mathbf{x}}_n+\mathbf{z}_n,
	\end{align}
	where $\mathbf{g}_n$ and $\mathbf{z}_n$ are $M\times 1$ vectors, $\mathbf{z}_n\sim\mathcal{N}(0,\sigma^2 \mathbf{I})$, and  
	\begin{align} \label{sigma22}
	\sigma^2\triangleq\frac{4(M+K-1)}{\bar{\alpha}}\frac{\eta(1-\eta)}{(1-2\eta)^2}.
	\end{align}
	Also $\tilde{\mathbf{H}}$ is an $M\times (M+K)$ matrix such that
	\begin{equation} \label{key}
	\tilde{\mathbf{H}}=\mathbf{H}+\Delta,
	\end{equation}
	where
	\begin{align}
	\left\{
	\begin{array}{ll}
	\Delta_{ij} \sim \mathcal{N}\left(0,\,\sigma^{2}_{1}\right), & \quad i\in [M],j\in [M+K]\ \text{and}\ i\neq j \\‎ 
	\Delta_{ii}=0,&\quad \forall i\in [M],
	\end{array}\right.
	\end{align}
and
	\begin{align}
	\sigma_1^2\triangleq\frac{\sigma_\alpha^2}{\bar{\alpha}^2}.
	\end{align}
	Also $\mathbf{H}$ is an $M\times(M+K)$ matrix such that $\mathbf{H}_{ii}=0$ for all $i\in [M]$ and the rest of entries are $1$.
	
	Without loss of generality, let us focus on the first sequencer. As mentioned before, fragments of individuals $2$ to $M$ is sent to sequencer $1$. In pooled sequencing scenario, sequencer $1$ will observe $G^{(1)}_n$, which is defined as 
	\begin{align} \label{mainrandom}
	G^{(1)}_n\triangleq&\frac{1}{\bar{\alpha}(1-2\eta)}\sum_{i=1}^{\alpha^{(1)}_{2,n}} \tilde{X}^{(1)}_{2,n,i} + \cdots +  \frac{1}{\bar{\alpha}(1-2\eta)}\sum_{i=1}^{\alpha^{(1)}_{M,n}} \tilde{X}^{(1)}_{M,n,i} \nonumber\\
	+& \frac{1}{\bar{\alpha}(1-2\eta)}\sum_{i=1}^{\tilde{\alpha}^{(1)}_{1,n}} \tilde{Y}^{(1)}_{1,n,i} + \cdots + \frac{1}{\bar{\alpha}(1-2\eta)}\sum_{i=1}^{\tilde{\alpha}^{(1)}_{K,n}} \tilde{Y}^{(1)}_{K,n,i}.
	\end{align}
	
	Consider the random variable $\frac{1}{\bar{\alpha}}\sum_{i=1}^{\alpha^{(1)}_{2,n}} \tilde{X}^{(1)}_{2,n,i}$ conditioned on $X_{2,n}$. We have
	\begin{align}\label{hey}
	\frac{1}{\bar{\alpha}}\sum_{i=1}^{\alpha^{(1)}_{2,n}} \tilde{X}^{(1)}_{2,n,i}&=\frac{1}{\bar{\alpha}}\sum_{i=1}^{\bar{\alpha}} \tilde{X}^{(1)}_{2,n,i}\nonumber\\
	&+\frac{1}{\bar{\alpha}}\sum_{i=\bar{\alpha}+1}^{\alpha^{(1)}_{2,n}} \tilde{X}^{(1)}_{2,n,i}.
	\end{align}
	It is trivial that the random variables $\frac{1}{\bar{\alpha}}\sum_{i=1}^{\bar{\alpha}} \tilde{X}^{(1)}_{2,n,i}$ and $\frac{1}{\bar{\alpha}}\sum_{i=\bar{\alpha}+1}^{\alpha^{(1)}_{2,n}} \tilde{X}^{(1)}_{2,n,i}$ are independent conditioned on $X_{2,n}$. Also, the distribution of $\frac{1}{\bar{\alpha}}\sum_{i=\bar{\alpha}+1}^{\alpha^{(1)}_{2,n}} \tilde{X}^{(1)}_{2,n,i}$ resembles that of $\frac{1}{\bar{\alpha}}\sum_{i=1}^{\alpha^{(1)}_{2,n}-\bar{\alpha}} \tilde{X}^{(1)}_{2,n,i}$ both conditioned on $X_{2,n}$.

	We define
	\begin{align}
	\xi^{(1)}_{2,n}\triangleq\alpha^{(1)}_{2,n}-\bar{\alpha}.
	\end{align}
	Thus we have $\mathbb{E}\left(\xi^{(1)}_{2,n}\right)=0$ and
	\begin{align}
	\textrm{Var}\left(\xi^{(1)}_{2,n}\right)=\textrm{Var}\left(\alpha^{(1)}_{2,n}\right)=\sigma_\alpha^2.
	\end{align}
	Similar to the steps taken in Section \ref{mathdata} and as a result of the central limit theorem
	\begin{align} \label{rr3}
		\frac{1}{\bar{\alpha}}\sum_{i=1}^{\bar{\alpha}} \tilde{X}^{(1)}_{2,n,i}\ \text{conditioned on}\ X_{2,n}\sim\mathcal{N}\left((1-2\eta)X_{2,n},\frac{4\eta(1-\eta)}{\bar{\alpha}}\right).
	\end{align} 
	
	For the second term in~\eqref{hey} we have
	\begin{align} \label{rr2}
	\mathbb{E}\left(\frac{1}{\bar{\alpha}}\sum_{i=1}^{\xi^{(1)}_{2,n}} \tilde{X}^{(1)}_{2,n,i}\ |\ X_{2,n}\right)&= \mathbb{E}_{\xi^{(1)}_{2,n}}\mathbb{E}\left(\frac{1}{\bar{\alpha}}\sum_{i=1}^{\xi^{(1)}_{2,n}} \tilde{X}^{(1)}_{2,n,i}\ |\ X_{2,n},\xi^{(1)}_{2,n}\right)\nonumber\\
	&=\mathbb{E}_{\xi^{(1)}_{2,n}}\left(\frac{1}{\bar{\alpha}}\xi^{(1)}_{2,n} (1-2\eta)X_{2,n}\right)=0.
	\end{align}
	
	Using the law of total variance we have
	\begin{align} \label{rr1}
	&\textrm{Var}\left(\frac{1}{\bar{\alpha}}\sum_{i=1}^{\xi^{(1)}_{2,n}} \tilde{X}^{(1)}_{2,n,i}\ |\ X_{2,n}\right)\nonumber\\
	&=\mathbb{E}_{\xi^{(1)}_{2,n}}\left(\textrm{Var}\left(\frac{1}{\bar{\alpha}}\sum_{i=1}^{\xi^{(1)}_{2,n}} \tilde{X}^{(1)}_{2,n,i}\ |\ X_{2,n},\xi^{(1)}_{2,n}\right)\right)\nonumber\\
	&+\textrm{Var}_{\xi^{(1)}_{2,n}}\left(\mathbb{E}\left(\frac{1}{\bar{\alpha}}\sum_{i=1}^{\xi^{(1)}_{2,n}} \tilde{X}^{(1)}_{2,n,i}\ |\ X_{2,n},\xi^{(1)}_{2,n}\right)\right)\nonumber\\
	&\stackrel{(a)}{=}\textrm{Var}_{\xi^{(1)}_{2,n}}\left(\frac{1}{\bar{\alpha}}\xi^{(1)}_{2,n} (1-2\eta)X_{2,n}\right)\nonumber\\
	&=\frac{(1-2\eta)^2}{(\bar{\alpha})^2} \textrm{Var}_{\xi^{(1)}_{2,n}}\left(\xi^{(1)}_{2,n}\right)\nonumber\\
	&=\frac{(1-2\eta)^2}{(\bar{\alpha})^2} \sigma_\alpha^2.
	\end{align}
	where $(a)$ results from the fact that $\mathbb{E}\left(\xi^{(1)}_{2,n}\right)=0$. Based on~\eqref{rr1},~\eqref{rr2},~\eqref{rr3},~\eqref{hey} and due to the MMSE rule and the orthogonality theorem we have
	\begin{align} \label{ghar1}
	\frac{1}{\bar{\alpha}(1-2\eta)}\sum_{i=1}^{\alpha^{(1)}_{2,n}} \tilde{X}^{(1)}_{2,n,i}=\frac{\alpha^{(1)}_{2,n}}{\bar{\alpha}}X_{2,n}+Z^{(1)}_{2,n},
	\end{align}
	where $Z^{(1)}_{2,n}\sim\mathcal{N}\left(0,\frac{4\eta(1-\eta)}{\bar{\alpha}(1-2\eta)^2}\right)$. Using the same steps, for the data collector we have
	\begin{align} \label{ghar2}
	\frac{1}{\bar{\alpha}(1-2\eta)}\sum_{i=1}^{\tilde{\alpha}^{(1)}_{2,n}} \tilde{Y}^{(1)}_{1,n,i}=\frac{\tilde{\alpha}^{(1)}_{1,n}}{\bar{\alpha}}y_{1,n}+\tilde{Z}^{(1)}_{1,n},
	\end{align}
	where $\tilde{Z}^{(1)}_{1,n}\sim\mathcal{N}\left(0,\frac{4\eta(1-\eta)}{\bar{\alpha}(1-2\eta)^2}\right)$.
	
	Therefore using~\eqref{ghar1} and~\eqref{ghar2},~\eqref{mainrandom} can be written as
	\begin{align}
	G^{(1)}_n=&\frac{\alpha^{(1)}_{2,n}}{\bar{\alpha}}X_{2,n}+\cdots+\frac{\alpha^{(1)}_{M,n}}
	{\bar{\alpha}}X_{M,n} \nonumber\\
	+&\frac{\tilde{\alpha}^{(1)}_{1,n}}{\bar{\alpha}}y_{1,n}+\cdots+\frac{\tilde{\alpha}^{(1)}_{K,n}}{\bar{\alpha}}y_{K,n}+Z^{(1)}_{n}, \label{mainrand}
	\end{align} 
	where
	\begin{align}
	Z^{(1)}_{n}=\sum_{m=2}^{M}Z^{(1)}_{m,n}+\sum_{k=1}^{K}\tilde{Z}^{(1)}_{k,n}.
	\end{align}
	Thus $Z^{(1)}_{n}\sim\mathcal{N}\left(0,\sigma^2\right)$
	where  $\sigma^2$ follows~\eqref{sigma22}. For the fraction $\frac{\alpha^{(1)}_{m,n}}{\bar{\alpha}}$ we can write it as
	\begin{align} \label{72}
	\frac{\alpha^{(1)}_{m,n}}{\bar{\alpha}}=1+\frac{\zeta^{(1)}_{m,n}}{\bar{\alpha}},
	\end{align}
	where
	\begin{align}
	\zeta^{(1)}_{m,n}\triangleq \alpha^{(1)}_{m,n}-\bar{\alpha}.
	\end{align}
	Therefore $\textrm{Var}(\zeta^{(1)}_{m,n})=\textrm{Var}\left(\alpha^{(1)}_{m,n}\right)$ and
	\begin{align} \label{74}
	\textrm{Var}\left(\frac{\zeta^{(1)}_{m,n}}{\bar{\alpha}}\right)=\frac{\textrm{Var}\left(\alpha^{(1)}_{m,n}\right)}{(\bar{\alpha})^2}=\sigma_1^2.
	\end{align}
	Based on~\eqref{72} and~\eqref{74},~\eqref{key} results.  All $G^{(i)}_n$s are derived in a similar manner. Consequently for $\mathbf{g}_n\triangleq \left[ G_n^{(1)},\cdots,G_n^{(M)}\right]^T$ and $\mathbf{z}_n\triangleq\left[Z_n^{(1)},\cdots,Z_n^{(M)}\right]^T$,~\eqref{random channel} holds.

	\subsection{Estimation Rule} \label{est}
	In this section, our objective is to find the rule that should be used by the data collector to estimate $\tilde{\mathbf{x}}_n$. Using the ML rule, the estimate $\hat{\tilde{\mathbf{x}}}_n$ is obtained by
	\begin{align}
	\hat{\tilde{\mathbf{x}}}_n&=\arg\max_{\tilde{\mathbf{x}}_n} \mathbb{P}(\mathbf{g}_n\ |\ \tilde{\mathbf{x}}_n)=\arg\max_{\tilde{\mathbf{x}}_n} \mathbb{P}(\mathbf{g}_n-\mathbf{H} \tilde{\mathbf{x}}_n\ |\ \tilde{\mathbf{x}}_n) \nonumber \\
	&=\arg\max_{\tilde{\mathbf{x}}_n} \mathbb{P}(\Delta \tilde{\mathbf{x}}_n+\mathbf{z}_n=\mathbf{g}_n-\mathbf{H} \tilde{\mathbf{x}}_n\ |\ \tilde{\mathbf{x}}_n),
	\label{MLeqn}
	\end{align}
	where the last equality is due to~\eqref{random channel} and~\eqref{key}. 
	We should now derive the distribution of $\Delta \tilde{\mathbf{x}}_n+\mathbf{z}_n$ conditioned on $\tilde{\mathbf{x}}_n$. We have
	\begin{align}
	\mathbb{E}(\Delta\tilde{\mathbf{x}}_n+\mathbf{z}_n\ |\ \tilde{\mathbf{x}}_n)=\mathbb{E}(\Delta \tilde{\mathbf{x}}_n\ |\ \tilde{\mathbf{x}}_n)+\mathbb{E}(\mathbf{z}_n\ |\ \tilde{\mathbf{x}}_n).
	\end{align}
	From~\eqref{ghar1}, we have $\mathbb{E}(\mathbf{z}_n\ |\ \tilde{\mathbf{x}}_n)=0$, so
	\begin{align}
	\mathbb{E}(\Delta \tilde{\mathbf{x}}_n+\mathbf{z}_n\ |\ \tilde{\mathbf{x}}_n)=\mathbb{E}(\Delta \tilde{\mathbf{x}}_n\ |\ \tilde{\mathbf{x}}_n)=\mathbb{E}(\Delta\ |\ \tilde{\mathbf{x}}_n)\tilde{\mathbf{x}}_n=0,
	\end{align}
	where the last equality is the result of the independence of $\Delta$ from $\tilde{\mathbf{x}}_n$ and the fact that the expectation of  $\Delta$ is zero.
	For the covariance matrix, we have
	\begin{align}
	&\mathrm{Cov}(\Delta \tilde{\mathbf{x}}_n+\mathbf{z}_n\ |\ \tilde{\mathbf{x}}_n)=\mathbb{E}((\Delta \tilde{\mathbf{x}}_n+\mathbf{z}_n)(\Delta \tilde{\mathbf{x}}_n+\mathbf{z}_n)^T\ |\ \tilde{\mathbf{x}}_n)\nonumber\\
	&=\mathbb{E}((\Delta \tilde{\mathbf{x}}_n+\mathbf{z}_n)(\tilde{\mathbf{x}}_n^T\Delta^T+\mathbf{z}_n^T)\ |\ \tilde{\mathbf{x}}_n)\nonumber\\
	&=\mathbb{E}(\Delta \tilde{\mathbf{x}}_n \tilde{\mathbf{x}}_n^T\Delta^T+\Delta \tilde{\mathbf{x}}_n \mathbf{z}_n^T+\mathbf{z}_n \tilde{\mathbf{x}}_n^T\Delta^T+\mathbf{z}_n\mathbf{z}_n^T\ |\ \tilde{\mathbf{x}}_n)\nonumber\\
	&=\mathbb{E}(\Delta \tilde{\mathbf{x}}_n \tilde{\mathbf{x}}_n^T\Delta^T\ |\ \tilde{\mathbf{x}}_n)+\mathbb{E}(\Delta \tilde{\mathbf{x}}_n \mathbf{z}_n^T\ |\ \tilde{\mathbf{x}}_n)\nonumber\\
	&+\mathbb{E}(\mathbf{z}_n \tilde{\mathbf{x}}_n^T\Delta^T\ |\ \tilde{\mathbf{x}}_n)+\mathbb{E}(\mathbf{z}_n\mathbf{z}_n^T\ |\ \tilde{\mathbf{x}}_n).
	\end{align}
	The second and third terms are zero due to the independence of $\Delta$ and $\tilde{\mathbf{x}}_n$. Also, the entries in $\mathbf{z}_n$ are independent and have the variance $\sigma^2$ (with zero mean), so we have
	\begin{align}
	\mathbb{E}(\mathbf{z}_n\mathbf{z}_n^T\ |\ \tilde{\mathbf{x}}_n)=\sigma^2 \mathbf{I}.
	\end{align}
	Thus,
	\begin{equation}
	\mathrm{Cov}(\Delta \tilde{\mathbf{x}}_n+\mathbf{z}_n\ |\ \tilde{\mathbf{x}}_n)=\mathbb{E}(\Delta \tilde{\mathbf{x}}_n \tilde{\mathbf{x}}_n^T\Delta^T\ |\ \tilde{\mathbf{x}}_n)+\sigma^2 \mathbf{I}.
	\label{eqncov}
	\end{equation}
	We now compute the first term in~\eqref{eqncov}. For $\Delta \tilde{\mathbf{x}}_n$ we have
	\begin{align}
	\Delta \tilde{\mathbf{x}}_n&=
	\begin{bmatrix}
	0&\delta_{12}&\cdots&\delta_{1M}&\cdots&\delta_{1(M+K)}\\
	\delta_{21}&0&\cdots&\delta_{2M}&\cdots&\delta_{2(M+K)}\\
	\vdots\\
	\delta_{M1}&\delta_{M2}&\cdots&0&\cdots&\delta_{M(M+K)}
	\end{bmatrix}
	\begin{bmatrix}
	X_{1,n}\\
	\vdots\\
	X_{M,n}\\
	y_{1,n}\\
	\vdots\\
	y_{K,n}
	\end{bmatrix}\nonumber\\
	&=\begin{bmatrix}
	\sum_{m=1,m\neq 1}^{M}\delta_{1m}X_{m,n}+\sum_{k=1}^{K}\delta_{1(M+k)}y_{k,n}\\
	\sum_{m=1,m\neq 2}^{M}\delta_{2m}X_{m,n}+\sum_{k=1}^{K}\delta_{2(M+k)}y_{k,n}\\
	\vdots\\
	\sum_{m=1,m\neq M}^{M}\delta_{Mm}X_{m,n}+\sum_{k=1}^{K}\delta_{M(M+k)}y_{k,n}\\
	\end{bmatrix}\nonumber\\
	&\triangleq\begin{bmatrix}
	a_1\\
	a_2\\
	\vdots\\
	a_{M}
	\end{bmatrix}\triangleq\mathbf{a}.
	\end{align}
	Therefore for $\Delta \tilde{\mathbf{x}}_n\tilde{\mathbf{x}}_n^T\Delta^T$ we have
	\begin{align}
	\Delta \tilde{\mathbf{x}}_n\tilde{\mathbf{x}}_n^T\Delta^T=\mathbf{a}\mathbf{a}^T
	\end{align}
	
	We have
	\begin{align}
	&\mathbb{E}\left(\left(\mathbf{a}\mathbf{a}^T\right)_{11}\ |\ \tilde{\mathbf{x}}_n\right)=\mathbb{E}\left(a_1^2\ |\ \tilde{\mathbf{x}}_n\right)\nonumber\\
	&=\mathbb{E}\left(\left(	\sum_{m=2}^{M}\delta_{1m}X_{m,n}+\sum_{k=1}^{K}\delta_{1(M+k)}y_{k,n}\right)^2\ |\ \tilde{\mathbf{x}}_n\right)\nonumber\\
	&=\mathbb{E}\left(\sum_{m=2}^{M}\delta^2_{1m}X^{2}_{m,n}+\sum_{k=1}^{K}\delta^2_{1(M+k)}y^{2}_{k,n}\ |\ \tilde{\mathbf{x}}_n\right)\nonumber\\
	&+2\mathbb{E}\left(\sum_{m\neq 1}\sum_{k}\delta_{1m}\delta_{1(M+k)}X_{m,n} y_{k,n}\ |\ \tilde{\mathbf{x}}_n\right)\nonumber\\
	&\stackrel{(a)}{=}\mathbb{E}\left(\sum_{m=2}^{M}\delta^2_{1m}+\sum_{k=1}^{K}\delta^2_{1(M+k)}\right)\nonumber\\
	&+\sum_{m\neq 1}\sum_{k}2X_{m,n} y_{k,n}\mathbb{E}\left(\delta_{1m}\delta_{1(M+k)}\right)\nonumber\\
	&=(M+K-1)\sigma_1^2+0=(M+K-1)\sigma_1^2,
	\end{align}
	where $(a)$ comes from the fact that $X_{m,n}^2=y_{k,n}^2=1$.

	For all  $\mathbb{E}\left((\mathbf{a}\mathbf{a}^T)_{ii}\ |\ \tilde{\mathbf{x}}_n\right),$ $i\in [M]$, the calculations are the same. Thus
	\begin{align}
	\mathbb{E}\left((\mathbf{a}\mathbf{a}^T)_{ii}\ |\ \tilde{\mathbf{x}}_n\right)=(M+K-1)\sigma_1^2, \quad \forall i\in [M].
	\end{align}
	We proceed by calculating this for a non-diagonal entry and we will show that the result is zero and then we generalize our result for all non-diagonal entries. We have
	\begin{align}
	&\mathbb{E}\left(\left(\mathbf{a}\mathbf{a}^T\right)_{12}\ |\ \tilde{\mathbf{x}}_n\right)=\mathbb{E}(a_1a_2\ |\ \tilde{\mathbf{x}}_n)\nonumber\\
	&=\mathbb{E}\left(\left(\sum_{m=1,m\neq 1}^{M}\delta_{1m}X_m+\sum_{k=1}^{K}\delta_{1(M+k)}y_k\right)\nonumber\right.\\
	&\left.\times\left(\sum_{m=1,m\neq 2}^{M}\delta_{2m}X_m+\sum_{k=1}^{K}\delta_{2(M+k)}y_k\right)\ |\ \tilde{\mathbf{x}}_n\right)=0,
	\end{align}
	where the last equality comes from the pairwise independence of any of the two $\delta_{1m}$, $\delta_{1(M+k)}$, $\delta_{2m}$, and $\delta_{2(M+k)}$ variables for all $m\in [M]$ and $k\in [K]$. For non-diagonal entries the calculations are the same. Therefore $\mathbb{E}(\Delta \tilde{\mathbf{x}}_n\tilde{\mathbf{x}}_n^T\Delta^T\ |\ \tilde{\mathbf{x}}_n)$ is zero for non-diagonal entries and $(M+K-1)\sigma_1^2$ for diagonal entries. Thus, based on~\eqref{eqncov} we have
	\begin{align}
	&\mathrm{Cov}(\Delta \tilde{\mathbf{x}}_n+\mathbf{z}_n\ |\ \tilde{\mathbf{x}}_n)
	=\left((M+K-1)\sigma_1^2+\sigma^2\right)\mathbf{I}.
	\end{align}
	Therefore
	\begin{align}
	\Delta\tilde{\mathbf{x}}_n+\mathbf{z}_n\ \text{conditioned on}\ \tilde{\mathbf{x}}_n \sim \mathcal{N}(0,\tilde{\sigma}^2\mathbf{I}),
	\end{align}
	where $\tilde{\sigma}^2\triangleq(M+K-1)\sigma_1^2+\sigma^2$. Therefore, from~\eqref{MLeqn} we have
	\begin{align}
	\hat{\tilde{\mathbf{x}}}_n&=\arg\max_{\tilde{\mathbf{x}}_n} \mathbb{P}(\mathbf{g}_n\ |\ \tilde{\mathbf{x}}_n)=\arg\max_{\tilde{\mathbf{x}}_n}\mathbb{P}(\Delta\tilde{\mathbf{x}}_n+\mathbf{z}_n=\mathbf{g}_n-\mathbf{H}\tilde{\mathbf{x}}_n\ |\ \tilde{\mathbf{x}}_n)\nonumber\\
	&=\arg\max_{\tilde{\mathbf{x}}_n} \frac{1}{(\sqrt{2\pi \tilde{\sigma}^2})^M} \exp \left(\frac{-1}{2\tilde{\sigma}^2}||\mathbf{g}_n-\mathbf{H}\tilde{\mathbf{x}}_n||^2\right)\nonumber\\
	&=\arg\max_{\tilde{\mathbf{x}}_n} \frac{-1}{2\tilde{\sigma}^2}||\mathbf{g}_n-\mathbf{H}\tilde{\mathbf{x}}_n||^2\nonumber\\
	&=\arg\min_{\tilde{\mathbf{x}}_n} ||\mathbf{g}_n-\mathbf{H}\tilde{\mathbf{x}}_n||.
	\end{align}
	Therefore, we should use the MMSE estimation to estimate $\tilde{\mathbf{x}}_n$. 
		
	\subsection{Proof of Theorem \ref{thm3}} \label{proof21}
	We are now ready to provide the proof for Theorem \ref{thm3}.
	
	\begin{proof}	
		Let $\hat{\tilde{\mathbf{x}}}_n$ be the estimation of $\tilde{\mathbf{x}}_n$ by the data collector based on the MMSE estimation.
		\begin{align}
		\label{eq:expand}
		\mathbb{P}(\mathrm{error})=\mathbb{P}(\hat{\tilde{\mathbf{x}}}_n\neq \tilde{\mathbf{x}}_n) = 
		\sum_{\boldsymbol{u}} \mathbb{P}(\tilde{\mathbf{x}}_n=\mathbf{u}) \mathbb{P}(\hat{\tilde{\mathbf{x}}}_n\neq \mathbf{u}\ |\ \tilde{\mathbf{x}}_n=\mathbf{u}).
		\end{align}
		On the other hand, using union bound, we have
		\begin{align} \label{notlast}
		\mathbb{P}(\hat{\tilde{\mathbf{x}}}_n\neq\mathbf{u}\ |\ \tilde{\mathbf{x}}_n=\mathbf{u}) \leq &\sum_{\mathbf{v}\neq \mathbf{u}} \mathbb{P} (\hat{\tilde{\mathbf{x}}}_n= \mathbf{v}\ |\ \tilde{\mathbf{x}}_n=\mathbf{u})\nonumber\\
		&=\sum_{\mathbf{v}\neq \mathbf{u}} \mathbb{P} ( ||\mathbf{g}_n-\mathbf{H}\mathbf{u}||_2 \geq ||\mathbf{g}_n-\mathbf{H}\mathbf{v}||_2 \ |\ \tilde{\mathbf{x}}_n=\mathbf{u})\nonumber\\
		&\stackrel{(a)}{=}\sum_{\mathbf{v}\neq \mathbf{u}} \Pr(||\Delta \mathbf{u}+\mathbf{z}_n||_2\geq\\
		&\qquad||\Delta \mathbf{u}+\mathbf{z}_n+\mathbf{H}(\mathbf{u}-\mathbf{v})||_2\ |\ \tilde{\mathbf{x}}_n=\mathbf{u})\nonumber\\
		&\stackrel{(b)}{=}\sum_{\mathbf{v}\neq \mathbf{u}} \mathbb{P}\left(-2(\Delta \mathbf{u}+\mathbf{z}_n)\cdot(\mathbf{H}(\mathbf{u}-\mathbf{v}))\right.\geq\\
		&\qquad\left.||\mathbf{H}(\mathbf{u}-\mathbf{v})||_2^2\ |\ \tilde{\mathbf{x}}_n=\mathbf{u}\right)\nonumber\\
		&=\sum_{\mathbf{v}\neq \mathbf{u}} \mathbb{P}\left( -2\mathbf{z}_n\cdot(\mathbf{H}(\mathbf{u}-\mathbf{v}))\geq\right.\\
		&\left.\qquad||\mathbf{H}(\mathbf{u}-\mathbf{v})||_2^2+2\Delta \mathbf{u}\cdot(\mathbf{H}(\mathbf{u}-\mathbf{v}))\ |\ \tilde{\mathbf{x}}_n=\mathbf{u}\right)\nonumber\\
		&=\sum_{\mathbf{v}\neq \mathbf{u}} \mathbb{P}\left(\frac{-\mathbf{z}_n\cdot(\mathbf{H}(\mathbf{u}-\mathbf{v}))}{||\mathbf{H}(\mathbf{u}-\mathbf{v})||_2\sigma}\geq\right.\\
		&\qquad\left.\frac{||\mathbf{H}(\mathbf{u}-\mathbf{v})||_2}{2\sigma}+\left(\frac{\Delta \mathbf{u}}{\sigma}\right)\cdot\frac{\mathbf{H}(\mathbf{u}-\mathbf{v})}{||\mathbf{H}(\mathbf{u}-\mathbf{v})||_2}\ |\ \tilde{\mathbf{x}}_n=\mathbf{u}\right)\nonumber\\
		\label{notlast2}
		&\stackrel{(c)}{=}\sum_{\mathbf{v}\neq \mathbf{u}} \mathbb{E}_\Delta\left(Q\left(\frac{||\mathbf{H}(\mathbf{u}-\mathbf{v})||_2}{2\sigma}+\left(\frac{\Delta \mathbf{u}}{\sigma}\right)\cdot\frac{\mathbf{H}(\mathbf{u}-\mathbf{v})}{||\mathbf{H}(\mathbf{u}-\mathbf{v})||_2}\right)\right)\nonumber\\
		&\stackrel{(d)}{\leq}\sum_{\mathbf{v}\neq \mathbf{u}} \int_{-\infty}^{\infty}Q\left(\frac{1}{\sigma}+x\right)\frac{1}{\sqrt{2\pi \sigma_X^2}}\exp\left(\frac{-x^2}{2\sigma_X^2}\right) dx,
		\end{align}
where 
		\begin{align}
		\sigma_X^2\triangleq(M+K-1)\left(\frac{\sigma_1}{\sigma}\right)^2.
		\end{align}
		Also, $(a)$ follows from~\eqref{random channel}, $(b)$ follows from squaring the two sides of inequality in $(a)$, and $(c)$ follows from the fact that
		\begin{align}
		\frac{-\mathbf{z}_n\cdot(\mathbf{H}(\mathbf{u}-\mathbf{v}))}{||\mathbf{H}(\mathbf{u}-\mathbf{v})||\sigma}\sim\mathcal{N}(0,1).
		\end{align}
		Inequality ${(d)}$ comes from the fact that
		\begin{align}
		\left(\frac{\Delta \mathbf{u}}{\sigma}\right)\cdot\frac{\mathbf{H}(\mathbf{u}-\mathbf{v})}{||\mathbf{H}(\mathbf{u}-\mathbf{v})||} \sim \mathcal{N}\left(0,\sigma_X^2\right),
		\end{align}
		and that the minimum magnitude for $\mathbf{H}(\mathbf{u}-\mathbf{v})$ is $2$ with a similar reasoning as in Lemma \ref{lemma}. Keep in mind that the last $K$ entries in $\mathbf{u}$ and $\mathbf{v}$ are known a--priori to the data collector and are constant, not variables.
		
		The right-hand side of~\eqref{notlast} can be upper-bounded as, 
		
				\begin{align}
		&\stackrel{(e)}{\leq}\sum_{\mathbf{v}\neq \mathbf{u}}\left(\int_{-\infty}^{\frac{-1}{\sigma}}\frac{1}{\sqrt{2\pi \sigma_X^2}}\exp\left(\frac{-x^2}{2\sigma_X^2}\right) dx+\nonumber\right.\\
		&\left.\int_{\frac{-1}{\sigma}}^{\infty}\exp\left(\frac{-1}{2}\left(\frac{1}{\sigma}+x\right)^2\right)\frac{1}{\sqrt{2\pi \sigma_X^2}}\exp\left(\frac{-x^2}{2\sigma_X^2}\right) dx\right)\nonumber\\
		&\stackrel{(f)}{\leq}\sum_{\mathbf{v}\neq \mathbf{u}} \left(Q\left(\frac{1}{\sigma}\right)+\nonumber\right.\\
		&\left.\frac{1}{\sqrt{1+(M+K-1)(\frac{\sigma_1}{\sigma})^2}}\exp\left(\frac{-1}{2\sigma^2+2(M+K-1)\sigma_1^2}\right)\nonumber\right)\\
		&= 2^M Q\left(\frac{1}{\sigma}\right)+\nonumber\\
		&\frac{2^M}{\sqrt{1+(M+K-1)(\frac{\sigma_1}{\sigma})^2}}\exp\left(\frac{-1}{2\sigma^2+2(M+K-1)\sigma_1^2}\right),
		\end{align} 
		where $(e)$ follows from 
		\begin{align}
		\label{ineq:Q(x)}
		Q(x)\leq
		\begin{cases}
		1, &x\leq 0\\
		\exp(\frac{-x^2}{2}),&x\geq 0.
		\end{cases}
		\end{align}
		In addition, equality $(f)$ is based on the  following upper bound 
		\begin{align}
		&\frac{1}{\sqrt{2\pi \sigma_X^2}}\int_{\frac{-1}{\sigma}}^{\infty}\exp\left(\frac{-1}{2}\left(\frac{1}{\sigma}+x\right)^2+\frac{-x^2}{2\sigma_X^2}\right)dx\nonumber\\
		&=\frac{1}{\sqrt{2\pi \sigma_X^2}}\nonumber\\
		&\int_{\frac{-1}{\sigma}}^{\infty}\exp\left(\frac{-1}{2\sigma^2}\frac{1}{1+\sigma_X^2}-\left(\sqrt{1+\frac{1}{\sigma_X^2}}x+\frac{1}{\sigma\sqrt{1+\frac{1}{\sigma_X^2}}}\right)^2\right)dx\nonumber\\
		&\leq\frac{1}{\sqrt{2\pi \sigma_X^2}}\exp\left(\frac{-1}{2\sigma^2}\frac{1}{1+\sigma_X^2}\right)\sqrt{2\pi\frac{\sigma_X^2}{1+\sigma_X^2}}\nonumber\\ \label{ineq:lastlast}
		&=\frac{1}{\sqrt{1+(M+K-1)(\frac{\sigma_1}{\sigma})^2}}\exp\left(\frac{-1}{2\sigma^2+2(M+K-1)\sigma_1^2}\right).
		\end{align}
		
		Therefore, from~\eqref{eq:expand} and~\eqref{ineq:lastlast} we have
		\begin{align} \label{er}
		&\mathbb{P}(\textrm{error})\leq 2^M Q\left(\frac{1}{\sigma}
		\right)\nonumber\\
		&+ \frac{2^M}{\sqrt{1+(M+K-1)(\frac{\sigma_1}{\sigma})^2}}\exp\left(\frac{-1}{2\sigma^2+2(M+K-1)\sigma_1^2}\right).
		\end{align}
		For the probability of error to be less than $\epsilon$, it is sufficient for both terms in the right-hand side of the inequality above to be less than $\frac{\epsilon}{2}$. For the first term, using~\eqref{ineq:Q(x)}, we have
		\begin{align}
		2^M Q\left(\frac{1}{\sigma}\right) \leq 2^M\exp\left(-\frac{1}{2}\left(\frac{1}{\sigma}\right)^2\right).
		\end{align}
		It is sufficient to put the right-hand side less than $\frac{\epsilon}{2}$. Thus
		\begin{align}
		\sigma^2 \leq \frac{1}{2\ln\left(\frac{2^{M+1}}{\epsilon}\right)}.
		\label{smallsigma}
		\end{align}
		Substituting $\sigma^2=	\frac{4(M+K-1)}{\bar{\alpha}}\frac{\eta (1-\eta)}{(1-2\eta)^2}$ in the above inequality, we have 
		\begin{align} \label{ineq1}
		\frac{\bar{\alpha}}{M+K-1} \geq \frac{8\eta (1-\eta)}{(1-2\eta)^2}\ln\left(\frac{2^{M+1}}{\epsilon}\right).
		\end{align}
		For the second term in the right-hand side of~\eqref{er}, we have
		\begin{align} \label{label}
		&\frac{2^M}{\sqrt{1+(M+K-1)(\frac{\sigma_1}{\sigma})^2}}\exp\left(\frac{-1}{2\sigma^2+2(M+K-1)\sigma_1^2}\right)\nonumber\\
		&\leq 2^M \exp\left(\frac{-1}{2\sigma^2+2(M+K-1)\sigma_1^2}\right).
		\end{align}
		By putting the right-hand side less than $\frac{\epsilon}{2}$ we have
		\begin{align} \label{ineq3}
		2\sigma^2+2(M+K-1)\sigma_1^2 \leq \frac{1}{\ln\left(\frac{2^{M+1}}{\epsilon}\right)}.
		\end{align}
		In order the inequality above to hold, it is sufficient for both terms in the left-hand side of it to be less than a half of the right-hand side; i.e.
		\begin{align}\label{n1}
		2\sigma^2\leq\frac{1}{2\ln\left(\frac{2^{M+1}}{\epsilon}\right)},
		\end{align}
		and
		\begin{align}\label{n2}
		2(M+K-1)\sigma_1^2 \leq \frac{1}{2\ln\left(\frac{2^{M+1}}{\epsilon}\right)}.
		\end{align}
		Substituting $\sigma^2=	\frac{4(M+K-1)}{\bar{\alpha}}\frac{\eta (1-\eta)}{(1-2\eta)^2}$ in~\eqref{n1} results in
		\begin{align} \label{ineq2}
		\frac{\bar{\alpha}}{M+K-1} \geq \frac{16\eta(1-\eta)}{(1-2\eta)^2}\ln\left(\frac{2^{M+1}}{\epsilon}\right).
		\end{align}
		Substituting $\sigma_1^2=\frac{\sigma_\alpha^2}{\bar{\alpha}^2}$ in~\eqref{n2} results in
		\begin{align}
		\bar{\alpha}\geq 2\sigma_\alpha\sqrt{(M+K-1)\ln\left(\frac{2^{M+1}}{\epsilon}\right)},
		\end{align}
		or equivalently
		\begin{align} \label{ineq4}
		\frac{\bar{\alpha}}{M+K-1}\geq 2\sigma_\alpha\sqrt{\frac{\ln\left(\frac{2^{M+1}}{\epsilon}\right)}{M+K-1}}.
		\end{align}
		Therefore, inequalities~\eqref{ineq1},~\eqref{ineq2}, and~\eqref{ineq4}, together,  provide a sufficient condition to satisfy the reconstruction condition. We note that~\eqref{ineq1} is dominated by~\eqref{ineq2}, and thus it is enough to consider~\eqref{ineq2} and~\eqref{ineq4}.

	\end{proof}
	
	\subsection{Discussions} \label{discuss2}
	\begin{remark}
	In this remark, we argue why Theorem~\ref{thm2} provides a sufficient condition for the all-but-one scheme not only with constant depth of coverage but also with random depth of coverage.

		Following the same steps as in Subsection \ref{recakhar}, we can see that sequencer $1$ exctracts $q^{(1)}_n$ in SNP position $n\in [N]$, where 
				\begin{align}
		q^{(1)}_n=&\frac{\alpha^{(1)}_{2,n}}{\bar{\alpha}}X_{2,n}+\cdots+\frac{\alpha^{(1)}_{M,n}}{\bar{\alpha}}X_{M,n}\nonumber\\
		+&\frac{\tilde{\alpha}^{(1)}_{1,n}}{\bar{\alpha}}Y_{1,n}+\cdots+\frac{\tilde{\alpha}_{K,n}}{\bar{\alpha}}Y_{K,n}+\tilde{Z}^{(1)}_n,
		\end{align}
		where $\textrm{Var}(\tilde{Z}^{(1)}_n)=\sigma^2$ and  $\sigma^2$ is defined in~\eqref{sigma22}. Similar to Subsection \ref{recakhar}, we can rewrite the equation above as:
		\begin{align}
		q^{(1)}_n=&X_{2,n}+\cdots+X_{M,n}\nonumber\\
		+&Y_{1,n}+\cdots+Y_{K,n}+\hat{Z}^{(1)}_n,
		\end{align}
		where $\textrm{Var}\left(\hat{Z}^{(1)}_n\right)=(M+K-1)\sigma_1^2+\sigma^2$. Due to the equality above, the Markov chain used in the proof of Theorem \ref{thm2}, is valid here too. Then taking the same steps as in that proof, we conclude that Theorem \ref{thm2} works here as well, and can be used to adjust $K$. In addition,  all discussions in Subsection \ref{discuss1} holds here as well. 

	\end{remark}
	
	\begin{remark}
	
	For reconstruction, we should keep in mind that there are two sources of noise in~\eqref{random channel}; one results from the randomness in the values of coverage depth and the other results from the sequencers' error in the reading phase, which have the variances $(M+K-1)\sigma_1^2$ and $\sigma^2$ respectively. Both of these variances are functions of $\bar{\alpha}$;  $\sigma_1^2$ is inversely proportional to $\bar{\alpha}^2$ and $\sigma^2$ is inversely proportional to $\bar{\alpha}$, so by increasing $\bar{\alpha}$, both variances become smaller, and can be adjusted to satisfy the reconstruction condition. First, from the privacy condition, we set $K$ based on Theorem \ref{thm2}, and then using the result,  we set $\bar{\alpha}$ from the reconstruction condition based on Theorem \ref{thm3}.  Keep in mind that if $\sigma_\alpha$ is zero, we reach a result similar to that of Theorem \ref{thm1} which was expected.
\end{remark}

	\begin{remark}
		Till now, all the paper was about haploid cells. In the case of diploid cells, there are two complete sets of chromosomes, in contrast to haploid cells in which there is one set \cite{HapDip}. In fact, in diploid cells, in each SNP position, there exist two SNPs, one for each chromosome containing that position. Therefore, in that case, we can assume that each individual represents two haploid-celled individuals--one set of chromosomes for each hypothetic individual. Thus, all theorems established for haploid cells can be used for diploid cells with some slight modifications of replacing $M$ and $K$ with $2M$ and $2K$, respectively.  
	\end{remark}

\section{Conclusion and Future Steps} \label{conclusion}
In this paper, we introduced the problem of privacy in the process of DNA sequencing, where the objective is to keep the DNA information private from the sequencer itself and at the same time, allow us to reconstruct the sequence(s) in a local trusted processor. This is in contrast with the previous results, where leaking information to the sequencer itself was accepted and the focus was on keeping this data private later where it is processed or stored.

To satisfy privacy condition at the sequencer(s) and reconstruction condition at the data collector, we develop an information gap between the sequencer(s) and the data collector using two techniques: (1) using more than one non-colluding sequencers, which all send the results of the read fragments to the single data collector, (2) adding the fragments of some known sequences to the pool of DNA fragments. These fragments are known to the data collector, but unknown to the sequencers. These two techniques provide enough freedom to satisfy both privacy and reconstruction conditions at the same time.

This work is the first step toward this direction, and there are many open challenges to be addressed in future, for instance, the case in which a subset of sequencers are colluding could be concerned. Moreover, considering the case in which fragments are not limited to contain just one SNP is of interest.  
Also, the lower bounds in the theorems in this paper can be improved.

	\bibliographystyle{ieeetr}
	\bibliography{journal_abbr,CodDecB}
	
	\newpage
	
	\vfill

\end{document}